\documentclass[a4paper,UKenglish,cleveref, autoref, thm-restate]{lipics-v2021}

\usepackage{svg}
\usepackage{tcolorbox}
\usepackage{csquotes}
\usepackage{todonotes}

\usepackage{thmtools} 
\usepackage{thm-restate}

\newtheorem{question}{Question}

\title{On $2$-Layer $k$-Matching-Planar Graphs} %TODO Please add

 \author{Saeed Odak}{Aalto University, Finland}{saeed.odak@aalto.fi}{https://orcid.org/0009-0005-6290-0965}{}

 \author{Jonathan Rollin}{FernUniversität in Hagen, Germany}{jonathan.rollin@fernuni-hagen.de}{https://orcid.org/0000-0002-6769-7098}{}

 \author{Torben Scheele}{TU Dortmund, Germany}{torben.scheele@tu-dortmund.de}{https://orcid.org/0009-0006-6119-6598}{Funded by the Deutsche Forschungsgemeinschaft (DFG, German Research Foundation)  -- project number 550144388.}

\authorrunning{S.~Odak, J.~Rollin, T.~Scheele}

\Copyright{Saeed Odak, Jonathan Rollin, Torben Scheele}

\ccsdesc[300]{Theory of computation~Parameterized complexity and exact algorithms}
\ccsdesc[100]{Theory of computation~Computational geometry}
\ccsdesc[300]{Theory of computation~Graph algorithms analysis} %TODO mandatory: Please choose ACM 2012 classifications from https://dl.acm.org/ccs/ccs_flat.cfm 

\keywords{pathwidth, $k$-matching-planar graph, $2$-layer drawings, NP-hardness, $k$-planar graph, one-sided graph drawing} %TODO mandatory; please add comma-separated list of keywords

\hideLIPIcs

\acknowledgements{This work was initiated at the Rhein-Ruhr Computational Geometry Workshop RRW 25. We thank the organizers and the participants for the fruitful atmosphere.}%optional

\nolinenumbers %uncomment to disable line numbering

\EventEditors{}
\EventNoEds{0}
\EventLongTitle{}
\EventShortTitle{}
\EventAcronym{}
\EventYear{}
\EventDate{}
\EventLocation{}
\EventLogo{}
\SeriesVolume{}
\ArticleNo{}
\begin{document}

\maketitle

\begin{abstract}
A graph is \emph{$k$-matching-planar} if it admits a drawing in the plane such that, for every edge $e$, the edges crossing $e$ contain no matching of size greater than $k$. The class of $k$-matching-planar graphs generalizes other beyond-planar graph classes, such as $k$-planar and fan-planar graphs. In a \emph{$2$-layer drawing} of a bipartite graph, the vertices of the two bipartition classes are placed on two parallel horizontal lines and edges are drawn as straight-line segments between them.

We prove that every graph with a $2$-layer $k$-matching-planar drawing has pathwidth at most $2k+1$. Moreover, for every $k \ge 0$, we construct a graph with a $2$-layer $k$-matching-planar drawing whose pathwidth is $3\lfloor k/2\rfloor + 1$. On the algorithmic side, we consider the one-sided recognition problem where a fixed embedding of the vertices on one side is given.
We show that this problem is NP-hard. On the other hand, we prove that the problem is fixed-parameter tractable with respect to~$k$. Finally, we prove that the two-sided variant cannot be approximated within any constant factor in polynomial time unless~$\operatorname{P}=\operatorname{NP}$.

\end{abstract}

\section{Introduction}

A drawing of a graph in the plane is \emph{planar} if no two edges cross. As planarity is restrictive (e.g., every planar $n$-vertex graph has at most $3n-6$ edges), the graph drawing community has long studied \emph{beyond-planar} graph classes, where edges may cross but the crossings are controlled in some way~\cite{DLM19,hong2020beyond}. A central thread in this area is to identify the broadest possible classes that still inherit the desirable structural properties of planar graphs.

\subparagraph*{$k$-Matching-Planar Graphs.} Among the most natural beyond-planar restrictions is to bound the number of crossings per edge. For an integer $k\geq 0$, a drawing is $k$-planar if every edge is involved in at most $k$ crossings. The class of $k$-planar graphs is classical and extensively studied~\cite{grigoriev2007algorithms, hoffmann2020simple, pach1997graphs}.

A substantially more general notion was recently considered by Hendrey, Karol, and Wood~\cite{HKW25}. A drawing $G$ is \emph{$k$-matching-planar} if, for every edge $e\in E(G)$, the matching number of the set of edges crossing $e$ is at most $k$. Equivalently, no edge is crossed by $k+1$ pairwise non-adjacent edges.\footnote{An equivalent concept (under the name \emph{$k$-independent crossing graphs}) was considered in passing by Merker, Scherzer, Schneider, and Ueckerdt~\cite{MSSU24} in the context of intersection graphs.} Every $k$-planar graph is $k$-matching-planar, but the converse fails dramatically: in a $k$-matching-planar drawing a single edge may participate in arbitrarily many crossings, provided the crossing edges share endpoints. For instance, $K_{3,n}$ is $1$-matching-planar for every $n$, yet in any drawing of $K_{3,n}$ some edge is crossed $\Omega(n)$ times since $K_{3,n}$ has crossing number $\Omega(n^2)$ \cite{HKW25,kleitman1970crossing}.

The class of $k$-matching-planar graphs subsumes many beyond-planar classes that have been studied separately. For example, the authors in \cite{HKW25} argue that the beyond-planar graph classes of \emph{fan-planar} graphs~\cite{KU22} and \emph{fan-crossing} graphs~\cite{Brandenburg20} are $1$-matching-planar. Moreover, they show that every \emph{$k$-fan-bundle-planar} graph defined by~\cite{ABKKS18} is $2k$-matching-planar.

Hendrey et al. ~\cite{HKW25} demonstrated that the class of $k$-matching-planar graphs is the broadest beyond-planar class for which a \emph{product structure theorem} is currently known. Therefore this class enjoys bounded queue-number, bounded nonrepetitive chromatic number, polynomial $p$-centered chromatic numbers, bounded boxicity, and asymptotic dimension~$2$. Studying the global structure of $k$-matching-planar graphs is therefore well-motivated: any structural result for this class automatically applies to all of the more restrictive classes mentioned~above.

\subparagraph*{$2$-Layered Drawings.} Hierarchical data relationships are often presented as layered graph drawings, where the vertices are placed on several consecutive horizontal lines. An important case are \emph{$2$-layer drawings} of bipartite graphs. A \emph{$2$-layer drawing} of a bipartite graph $G=(X\cup Y, E)$ places the vertices of $X$ and $Y$ on two parallel horizontal lines and draws each edge as a straight-line segment between the two lines. This drawing model is one of the oldest and most natural ways to visualize bipartite graphs~\cite{HN13}. A typical drawing~objective, to improve readability, is the minimization of the number of crossings between the edges. %, which has been studied intensively and 
Two-layer crossing minimization is NP-hard, when the order of one layer is fixed~\cite{EWh94,EWo94,GJ83,HN13}. More recently, the local version of minimizing the maximum number of crossings per edge was studied~\cite{angelini20212, ALFS24, di20142, OneSidedKPlanar, KOW25}.

The bipartite graphs that admit a crossing-free $2$-layer drawing are exactly the forests of caterpillars, which have pathwidth at most~$1$~\cite{EWh94}. This characterization was extended by Wood~\cite{Wood23}, who showed that bipartite graphs of bounded pathwidth are precisely those admitting a $2$-layer drawing with a suitable forbidden crossing pattern.
Once one allows crossings, an extensive literature has investigated which bipartite graphs admit $2$-layer drawings with various restrictions on the crossings, what the maximum density of such graphs is, which algorithms can decide membership, and which graph parameters (such as pathwidth and treewidth) are bounded. We summarize the most relevant prior results below.

For \emph{$2$-layer $k$-planar graphs}, those admitting a $2$-layer drawing with at most $k$ crossings per edge, Angelini, Da Lozzo, F\"orster, and Schneck~\cite{ALFS24} initiated a systematic study. Among other structural properties, they showed that every $2$-layer $k$-planar graph has pathwidth at most $k+1$, with a lower bound of $(k+3)/2$. Okada~\cite{Okada25} subsequently sharpened the lower bound by exhibiting for every $k\geq 0$, a $2$-layer $k$-planar graph of pathwidth exactly~$k+1$. %, thereby resolving the pathwidth of $2$-layer $k$-planar graphs. 
Recognizing $2$-layer $k$-planar graphs is $\operatorname{XNLP}$-complete and hence W[$t$]-hard for every~$t$, however Kobayashi, Okada, and Wolff~\cite{KOW25} gave an $\operatorname{XP}$-algorithm parameterised by $k$ as well as an FPT-algorithm for the one-sided variant where the order of one layer is given. It has later been shown that the one-sided variant is NP-complete~\cite{OneSidedKPlanar}.

For \emph{$2$-layer fan-planar graphs}, those in which the edges crossing any single edge share a common endpoint, Binucci et al.~\cite{BCDGKKMT17} characterized the recognizable graphs as subgraphs of \emph{stegosaurus graphs} and gave a linear-time recognition algorithm for biconnected bipartite graphs. Polynomial-time recognition for general bipartite graphs was an open problem for several years and was only recently resolved by Kobayashi and Okada~\cite{KO25}. \emph{Proper $h$-layer fan-planar drawings} were studied by Biedl et al.~\cite{BCKMNR20}, who proved that such graphs have bounded pathwidth and that the corresponding recognition problem is FPT in $h$ when the embedding is fixed.\footnote{Layered drawings where each edge has its endpoints on adjacent layers are called proper in~\cite{BCKMNR20}.}

\subparagraph*{Our Contribution.} We introduce and study the class of \emph{$2$-layer $k$-matching-planar graphs}, that is, bipartite graphs admitting a $2$-layer drawing in which, for every edge $e$, the matching number of the set of edges crossing $e$ is at most $k$.

Our results address pathwidth, recognition complexity, and a parameterized~algorithm:

\begin{itemize}
    \item \textbf{Pathwidth.} Every graph with a $2$-layer $k$-matching-planar drawing has pathwidth at most $2k+1$. For every $k\geq 0$, there is a graph $G_k$ with a $2$-layer $k$-matching-planar drawing whose pathwidth equals $3\lfloor k/2\rfloor+1$. In particular, the pathwidth of $2$-layer $k$-matching-planar graphs is $\Theta(k)$, but unlike in the $2$-layer $k$-planar setting, the constant in front of $k$ is strictly greater than~$1$.
    \item \textbf{NP-completeness of One-Sided Recognition.} Deciding whether a bipartite graph with one given vertex order admits a $2$-layer $k$-matching-planar drawing is NP-complete.
    \item \textbf{FPT Recognition.} The one-sided recognition problem is fixed-parameter tractable in $k$.
    \item \textbf{Inapproximability of Two-Sided Recognition.} The optimization version of the two-sided recognition problem, asking for the smallest $k$ such that a given bipartite graph admits a $2$-layer $k$-matching-planar drawing, cannot be approximated within any constant factor in polynomial time unless $\operatorname{P}=\operatorname{NP}$, even for trees.
\end{itemize}

Together, these results give a qualitative picture of the pathwidth and one-sided recognition complexity of $2$-layer $k$-matching-planar graphs, and generalize the corresponding results of Angelini et al.~\cite{ALFS24}, Okada~\cite{Okada25}, and Kobayashi et al.~\cite{KOW25} to $2$-layer $k$-matching-planar graphs.

\subparagraph*{Technical Overview.} At first sight, one might hope to derive our results from the existing $2$-layer $k$-planar arguments by simply replacing ``at most $k$ crossings per edge'' by ``at most $k$ edges in any matching''. Several genuine obstacles prevent this. We require new structural tools aligned with the ``matching'' property, which is different from the ``number of crossings'' property. 

The pathwidth proof of Angelini et al.~\cite{ALFS24} for $2$-layer $k$-planar graphs proceeds by ordering the edges along the drawing and, for each edge $e_i$, using directly the at most $k$ edges crossing $e_i$ to populate a bag. In the matching-planar setting this is unavailable: an edge may be crossed by an unbounded number of edges, and the ``at most $k$'' guarantee delivers only a bounded \emph{matching} among the crossing edges. To compensate, we work with vertex covers instead of crossing edges. By K\H{o}nig's theorem, every bipartite set of edges of matching number at most $k$ admits a vertex cover of size at most $k$, but a fixed vertex cover does not suffice: a single bag based on one vertex cover would not be consistent across the drawing, and the vertex covers chosen for adjacent edges could differ wildly. Our proof therefore relies on a canonical choice of minimum vertex covers along the drawing, developed by Jeong, S{\ae}ther, and Telle~\cite{JST18} that is monotone with respect to neighboring edges. Our resulting bound on the pathwidth is by a factor of two larger than the bound $k+1$ for $2$-layer $k$-planar graphs~\cite{ALFS24}. Our lower bound shows that this loss is at least partially unavoidable: there are $k$-matching-planar graphs with pathwidth at least~$3k/2$.

The NP-hardness reduction for \textsc{OneSided-$k$-MatchingPlanarity} is inspired by the reduction in~\cite{OneSidedKPlanar} for the $k$-planar variant, but the additional flexibility of matching-planarity forces a more elaborate construction. In a $k$-planar drawing, an edge with $k$ crossings is already saturated; we can use such an edge to anchor positions in the construction. In a $k$-matching-planar drawing, the same edge may still be crossed by many further edges as long as they are pairwise adjacent, so we cannot prevent additional crossings simply by reaching the local crossing budget. We circumvent this by introducing additional gadgets that consist of various complete bipartite graphs serving as ``anchors''.

Our FPT-algorithm follows the dynamic-programming template of Kobayashi, Okada, and Wolff~\cite{KOW25} for $2$-layer $k$-planar graphs.  There the authors describe reduction procedures to yield an instance with two crucial properties: (i) every vertex has degree at most $k+1$, and (ii) edges crossing a fixed window have endpoints contained in a small region. 
For the matching-planar setting, we develop two new structural lemmas. Lemma~\ref{lem:redundant} (the \emph{twin-reduction} lemma) shows that whenever a vertex $u$ has more than $2|N(u)|$ ``twin'' vertices with the same neighborhood, one can be removed without changing $k$-matching-planarity. Lemma~\ref{lem:windowsize} (the \emph{window-size} lemma) then shows that, after exhaustive twin-reduction, a window of at most $\ell=2^{O(k)}$ vertices in $X$ accommodates all relevant interactions. Once these structural facts are in hand, the dynamic programming over windows of size $\ell$ proceeds along the lines of~\cite{KOW25}, although the bookkeeping is more involved due to the larger windows.

Finally, for the two-sided variant we follow the reduction strategy from the \textsc{Bandwidth} problem on trees used by~\cite{KOW25} for the $k$-planar setting. Given a tree $T$, we construct an auxiliary tree $G_T$ as the $1$-subdivision of $T$ with an additional pendant attached to each original vertex. We show that the smallest $k$ for which $G_T$ admits a $2$-layer $k$-matching-planar drawing is sandwiched between $\operatorname{bw}(T)/2$ and $3\operatorname{bw}(T)$ (up to additive constants). Combined with the constant-factor inapproximability of bandwidth on trees~\cite{DFU11}, this transfers to constant-factor inapproximability of the smallest matching-planarity parameter.

\section{Preliminaries}
 
\subparagraph*{Drawings and Crossings.} A \emph{drawing} $G$ is a graph with $V(G)\subset \mathbb{R}^2$ in which each edge $uv\in E(G)$ is a non-self-intersecting curve in $\mathbb{R}^2$ with endpoints $u$ and $v$ that does not contain any vertex of $G$ in its interior. A drawing $G$ is \emph{simple} if any two edges share at most one point, including endpoints. A simple drawing $G$ is:
\begin{itemize}
    \item \emph{planar} if no edge of $G$ is involved in a crossing,
    \item \emph{$k$-planar} if each edge of $G$ is involved in at most $k$ crossings, and
    \item \emph{$k$-matching-planar} if no edge of $G$ is crossed by $k+1$ pairwise non-adjacent edges.
\end{itemize}
A graph belongs to the class of planar, $k$-planar, or $k$-matching-planar graphs if it admits a simple embedding that is respectively planar, $k$-planar, or $k$-matching-planar.

\subparagraph*{Pathwidth.} A \emph{path decomposition} of a graph $G=(V,E)$ is a sequence $B_1,\ldots,B_t$ of subsets of $V$, called \emph{bags}, satisfying the following three conditions:
\begin{enumerate}
    \item Each vertex lies in some bag: $\bigcup_{1\leq i\leq t}B_i=V$.
    \item Each edge lies in some bag: for every edge $uv\in E$ there is an index $i$ with $u,v\in B_i$.
    \item For every vertex $v\in V$, the set $\{i\colon v\in B_i\}$ is a contiguous interval; equivalently, $B_i\cap B_k \subseteq B_j$ whenever $1\leq i\leq j\leq k\leq t$.
\end{enumerate}
The \emph{width} of a path decomposition is $\max_{1\leq i\leq t} |B_i|-1$, and the \emph{pathwidth} $\operatorname{pw}(G)$ of $G$ is the minimum width over all path decompositions of $G$. Pathwidth is an important parameter in graph structure theory and parameterized complexity~\cite{CFKLMPPS15,DHK05,RS83}, with many connections to graph drawing~\cite{DFKLMNRRWW08,Hlineny03,Suderman04,Zehavi22}.
 
\subparagraph*{Bipartite Graphs and $2$-Layer Drawings.} We consider bipartite graphs $G=(X\cup Y,E)$ with bipartition classes $X$ and $Y$. The subgraph of $G$ induced by a vertex set $V\subseteq X\cup Y$ is denoted by $G[V]$. The neighborhood of a vertex $u$ is denoted by $N(u)$, and for a vertex set $U$, we write $N(U)=\bigcup_{u\in U}N(u)$.
 
In a \emph{$2$-layer drawing} of $G$, the bipartition classes $X$ and $Y$ are placed on two distinct parallel horizontal lines, and edges are drawn as straight-line segments between their endpoints. We let $\prec_X$ and $\prec_Y$ denote the linear orders of $X$ and $Y$, respectively, induced by reading the vertices from left to right along their layer. We also describe these orders informally by saying that a vertex $u$ lies \emph{to the left} or \emph{to the right} of another vertex $v$ from the same bipartition class. In such a drawing, two edges $uv$ and $u'v'$, with $u,u'\in X$ and $v,v'\in Y$, cross if and only if $u\prec_X u'$ and $v'\prec_Y v$, or vice versa. A \emph{$2$-layer $k$-matching-planar drawing} is a $2$-layer drawing which is $k$-matching-planar. A bipartite graph is \emph{$2$-layer $k$-matching-planar} if it admits such a drawing.

\section{Pathwidth of $2$-Layer $k$-Matching-Planar Graphs}
\label{sec:pathwidth}

\subparagraph*{An Upper Bound on the Pathwidth.} Wood~\cite{Wood23} established a tight relation between $2$-layer drawings of bipartite graphs and their pathwidth: A class $\mathcal{G}$ of bipartite graphs has bounded pathwidth if and only if there exist $p$, $q$, $r\in\mathbb{N}$ such that every graph in $\mathcal{G}$ has a $2$-layer drawing with no $p$-crossing and no $(q, r)$-crossing. Here, a $p$-crossing is formed by $p$ pairwise crossing edges and a $(q,r)$-crossing is formed by two non-crossing matchings $M_1$ and $M_2$ of sizes $q$ and $r$, respectively, such that each edge of $M_1$ crosses each edge of~$M_2$. This implies that bipartite graphs with a $2$-layer $k$-matching-planar drawing have pathwidth~$O(k^3)$. We improve this bound by showing that every graph with a $2$-layer $k$-matching-planar drawing has a path decomposition of width at most $2k+1$. Let us remark that no converse statements to our results hold: for each $k$, any sufficiently large 1-subdivided star has pathwidth 2 and is not $2$-layer $k$-matching-planar.

\begin{theorem}
    The pathwidth of any $2$-layer $k$-matching-planar graph is at most $2k+1$.
\end{theorem}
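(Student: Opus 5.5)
Fix a $2$-layer $k$-matching-planar drawing of $G=(X\cup Y,E)$. We may assume $G$ is connected and has no isolated vertices; otherwise we decompose each component separately and concatenate the decompositions.

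\textbf{Edge order.} Order the edges $e_1,\dots,e_m$ lexicographically by their endpoints: $uv$ precedes $u'v'$ if $u\prec_X u'$, or if $u=u'$ and $v\prec_Y v'$. This is the order used by Angelini et al.~\cite{ALFS24}. Two consecutive edges $e_i,e_{i+1}$ either share their $X$-endpoint, or $e_{i+1}$ starts the next $X$-vertex. Let $C_i$ be the set of edges crossing $e_i$, and let $S_i=C_i\cup\{\text{edges between } e_i \text{ and } e_{i+1}\}$, i.e.\ those that separate the two consecutive edges. Since $C_i$ is a bipartite edge set with matching number at most $k$, K\H{o}nig's theorem gives a vertex cover of $C_i$ of size at most $k$.

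\textbf{Bags.} Set $B_i=\{x_i,y_i\}\cup V_i$, where $e_i=x_iy_i$ and $V_i$ covers every edge $e_j$ that is "split" by position $i$, meaning one endpoint of $e_j$ has already appeared before $e_i$ and the other appears after. In a $2$-layer drawing, the split edges at position $i$ are exactly those crossing $e_i$ or $e_{i+1}$, up to edges incident to $x_i$ or $y_i$. So it suffices to cover $C_i\cup C_{i+1}$, and two covers of size at most $k$ give a bag of size at most $2k+2$, i.e.\ width $2k+1$. Naively, then, $B_i=\{x_i,y_i\}\cup K_i\cup K_{i+1}$ with $K_i$ a minimum cover of $C_i$. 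The vertex and edge coverage conditions are immediate: each edge $e_i$ lies in $B_i$.

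\textbf{Contiguity (main obstacle).} Arbitrary minimum covers $K_i$ may jump around, so a vertex could appear in bags $B_a$ and $B_c$ but not in $B_b$. To fix this I would use the canonical choice of Jeong, S{\ae}ther and Telle~\cite{JST18}. For bipartite graphs, minimum vertex covers form a lattice: there is a unique "leftmost" minimum cover, for instance the one maximizing $|K\cap X|$ with ties broken by the lattice order. I would choose these canonical covers and prove a monotonicity lemma: if a vertex $w$ lies in both $K_a$ and $K_c$ for $a<c$, then for every $b$ in between, $w$ is in $K_b\cup K_{b+1}$ or $w\in\{x_b,y_b\}$. The argument would be an exchange argument. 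Suppose $w\notin B_b$. Then all edges at $w$ that cross $e_b$ are covered by their other endpoints, and since the edges at $w$ crossing $e_a$ and $e_c$ interleave with those crossing $e_b$ in the planar order, we can swap covers to get a cover of $C_a$ or $C_c$ that is smaller or earlier in the lattice order. This contradicts canonicity.

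If strict monotonicity fails, I would fall back on a repair. First check that every vertex lies in the set of bags it is supposed to occupy, namely the interval from its first incident edge to its last. Then enforce contiguity by adding $w$ to all intermediate bags, and argue via the lattice structure that a vertex is added to $B_b$ only if it replaces some vertex of $K_b$. This keeps $|B_b|\le 2k+2$. Establishing this step is where the real work lies.
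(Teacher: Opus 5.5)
There is a genuine gap, and it goes beyond the unproved exchange argument. The skeleton itself cannot give width $2k+1$, whatever covers you pick or however you repair them: one bag per edge of $G$ in lexicographic order, each bag containing that edge's endpoints.

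Take $K_{2,n}$ with $x_1\prec_X x_2$ and $y_1\prec_Y\cdots\prec_Y y_n$. This drawing is $1$-matching-planar: all edges crossing $x_1y_j$ share $x_2$, and all edges crossing $x_2y_j$ share $x_1$. So the target is bags of size $4$. Your order is $x_1y_1,\ldots,x_1y_n,x_2y_1,\ldots,x_2y_n$. Each $y_s$ lies in the bags of $x_1y_s$ and $x_2y_s$, so contiguity forces all of $y_1,\ldots,y_n$ into the bag of $x_1y_n$, which then has size $n+1$.

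Your repair claim fails here too: for $e_b=x_1y_n$ you have $K_b=\{x_2\}$, so the added $y_s$ replace nothing. The error enters at ``it suffices to cover $C_i\cup C_{i+1}$''. A vertex cover of the edges across a position is not the set of vertices occurring on both sides of it, and contiguity needs the latter.

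Your order also does not produce nested cuts. Passing from $xy_{\mathrm{last}}$ to $x'y_{\mathrm{first}}$ with $y_{\mathrm{first}}\prec_Y y_{\mathrm{last}}$ moves the $Y$-vertices in between from the left of the current edge to its right. The Jeong--S{\ae}ther--Telle monotonicity requires nested cuts, with the K\H{o}nig cover taken relative to the left/right sides of the cut, not the cover maximising $|K\cap X|$.

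The missing idea is to sweep only along a plane skeleton. The paper makes the drawing edge-maximal and takes an edge-maximal plane subgraph $H$: disjoint caterpillars containing the leftmost and rightmost edges. It orders the edges of $H$ as $e_1,\ldots,e_m$ from left to right. Let $L_i$ and $R_i$ be the vertices strictly left and right of $e_i$. Then:
\begin{itemize}
\item $L_i\subseteq L_{i+1}$ and $R_{i+1}\subseteq R_i$;
\item every edge crossing $e_i$ joins $L_i$ to $R_i$;
\item the left-K\H{o}nig cover $C_i$ (union of the $L_i$-parts and intersection of the $R_i$-parts of all minimum covers) is a minimum cover, so $|C_i|\le k$;
\item $C_i\cap L_i\supseteq C_{i+1}\cap L_i$ and $C_i\cap R_{i+1}\subseteq C_{i+1}\cap R_{i+1}$, which is exactly what yields contiguity.
\end{itemize}
The bags are $e_i\cup C_{i-1}\cup C_i$ and $e_i\cup C_i\cup C_{i+1}$, plus $\{u\}\cup(e_i\cap e_{i+1})\cup C_i\cup C_{i+1}$ for each vertex $u$ strictly between $e_i$ and $e_{i+1}$.

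Crucially, edges outside $H$ get no bag of their own. Such an edge $uv$, with $u$ on the left, crosses a consecutive run $e_\ell,\ldots,e_r$ of $H$-edges, and every $C_i$ in that run contains $u$ or $v$. If $v\in C_\ell$, then $u$ is an endpoint of $e_{\ell-1}$ or lies strictly between $e_{\ell-1}$ and $e_\ell$, so $uv$ sits in a bag there. The case $u\in C_r$ is symmetric. Otherwise the cover switches from $u$ to $v$ between some $C_i$ and $C_{i+1}$, so both endpoints lie in $C_i\cup C_{i+1}$. This is why the $Y$-vertices of $K_{2,n}$ need not stay in the bags across the whole sweep.
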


\begin{proof}
Let $\Gamma$ be a $2$-layer $k$-matching-planar drawing of some graph $G$.
Without loss of generality, we assume that $\Gamma$ is maximal (so any new straight edge between non-adjacent vertices on opposite sides would violate $k$-matching-planarity).
Let $H$ denote an edge-maximal plane subgraph in the drawing $\Gamma$ that contains the edge between the two leftmost vertices, contains the edge between the two rightmost vertices, and contains no isolated vertices.
These edges on the left and right boundary exist in $\Gamma$ due to the maximality of $\Gamma$.
Observe that $H$ consists of disjoint caterpillars. 
Let $e_1,\ldots,e_m$ denote the edges in $H$ from left to right, that is, $uv$ is before $u'v'$ if either $v$ is left of $v'$ or $v=v'$ and $u$ is left of $u'$.
Let $S_i$ denote the set of edges crossing $e_i$ in $\Gamma$.
We will choose a special vertex cover for each $S_i$ as follows.
Let $L_i$ denote the set of vertices to the left of $e_i$ and let $R_i$ denote the set of vertices to the right of $e_i$ (both excluding the endpoints of  $e_i$).
Then $(L_i\cup R_i,S_i)$ is a bipartite graph.
Let $\mathcal{F}_i$ denote the family of all minimum vertex covers of this graph.
We define a \emph{left-K\H{o}nig-cover} as the set $C_i$ that consists of all vertices from $L_i$ that are contained in the union of the vertex covers in $\mathcal{F}_i$ and of all vertices from $R_i$ that are contained in the intersection of all the vertex covers in $\mathcal{F}_i$ (see Figure \ref{fig:upperbound}). That is

\[C_i = \left(\bigcup_{C \in \mathcal{F}_i} L_i 
\cap C\right) \cup 
\left(\bigcap_{C \in \mathcal{F}_i} R_i 
\cap C\right).\]
Jeong, S{\ae}ther, and Telle~\cite[Corollary 3.3 and Lemma 3.4]{JST18} proved the following properties.\footnote{In their notation, $C_i$ is an $L_i$-K\H{o}nig cover of the bipartite graph $(L_i\cup R_i,S_i)$.}

\begin{figure}[htbp]
  \centering
  \includegraphics[]{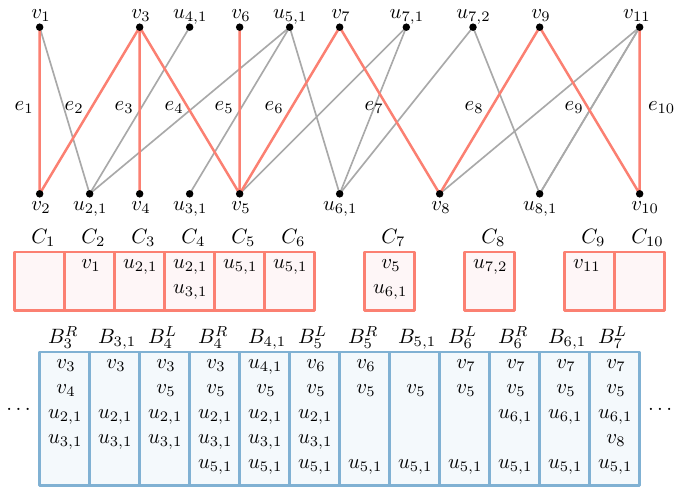}
  \caption{A $2$-layer 2-matching-planar drawing with an edge-maximal plane subgraph (red) as well as the left-K\H{o}nig-cover for its edges and the resulting path decomposition of width at most $5$.}
  \label{fig:upperbound}
\end{figure}

\begin{lemma}[\cite{JST18}]
\label[lemma]{lem:Koenig-cover}
\begin{enumerate}[(I)]
	\item For each $i$, $1\leq i\leq m$, $C_i$ is a minimum vertex cover of $S_i$.
	
	\item For each $i$, $1\leq i<m$, we have $C_i\cap L_i \supseteq C_{i+1}\cap L_i$ and $C_i \cap R_{i+1} \subseteq C_{i+1} \cap R_{i+1}$.
\end{enumerate}	
\end{lemma}

As $\Gamma$ is $k$-matching-planar, \Cref{lem:Koenig-cover}~(I) implies that $C_i$ has size at most $k$.
Now we are ready to describe a path-decomposition of $G$ (see again Figure \ref{fig:upperbound}).
\begin{itemize}
	\item Let $B^R_1 = e_1 \cup C_1 \cup C_{2}$ and let $B^L_{m} = e_m \cup C_{m-1} \cup C_m$.
	
	\item For each $i$, $1< i< m$, let $B^L_i= e_i \cup C_{i-1} \cup C_i$ and let $B^R_i= e_i \cup C_i \cup C_{i+1}$.
	
	\item For each $i$, $1\leq i < m$, let $u_{i,1},\ldots,u_{i,t_i}$ denote the vertices between $e_i$ and $e_{i+1}$ in an arbitrary order.
	For each $j$, $1\leq j\leq t_i$, let $B_{i,j}= \{u_{i,j}\} \cup (e_i\cap e_{i+1}) \cup C_i \cup C_{i+1}$.
\end{itemize}
Each bag $B_i^L$, $B_i^R$, and $B_{i,j}$ has size at most $2k+2$.
We claim that 
\[B^R_1,B_{1,1},\ldots,B_{1,t_1}, B^L_2, B^R_2 ,B_{2,1}, \ldots, B_{2,t_2},B^L_3, B^R_3,B_{3,1}, \ldots, B^L_{m}\]
is a path decomposition of $G$.

\begin{description}
	\item[Each vertex in some bag:] Due to the technical assumptions on $H$, each vertex from $G$ is either incident to an edge from $H$ or between two edges from $H$.
	Each vertex incident to an edge from $H$ is in some bag $B^L_i$ or $B^R_i$. Any other vertex is in some bag $B_{i,j}$.
	
	\item[Each edge in some bag:] We have $e_1\subseteq B^R_1$, $e_m\subseteq B^L_m$ and, for $1<i<m$, $e_i\subseteq B^L_i$.
	Any other edge $uv$ crosses some edge from $H$, as $H$ is edge-maximal plane.
	Let $e_\ell$ and $e_{r}$ denote the leftmost and rightmost edges from $H$ crossed by $uv$.
	Then $uv$ crosses each edge $e_i$ in $H$ with $\ell\leq i\leq r$.
	For each $i$, $\ell\leq i\leq r$, $uv$ is in $S_i$ and so the left-K\H{o}nig cover $C_i$ of $S_i$ contains at least one of $u$ or $v$.
	Assume that $u$ is to the left of $e_\ell$.
	If $v\in C_\ell$ and $u$ is an endpoint of $e_{\ell-1}$, then $u$, $v\in B^R_{\ell-1}$.
	If $v\in C_\ell$ and $u$ is between $e_{\ell-1}$ and $e_\ell$, then $u$, $v\in B_{\ell-1,j}$ for some $j$.
	Similarly, if $u\in C_r$,  then $u$, $v\in B^L_{r+1}$ or $u$, $v\in B_{r,j}$ for some $j$.
	If neither $v\in C_\ell$ nor $u\in C_r$, then there is some $i$, $\ell \leq i < m$, such that $u$, $v\in C_{i}\cup C_{i+1}$.
	Hence, $u$, $v\in B^R_i$. See \cref{fig:every_edge_cases} for an illustration of the different cases.

    \begin{figure}[htbp]
      \centering
      \includegraphics[]{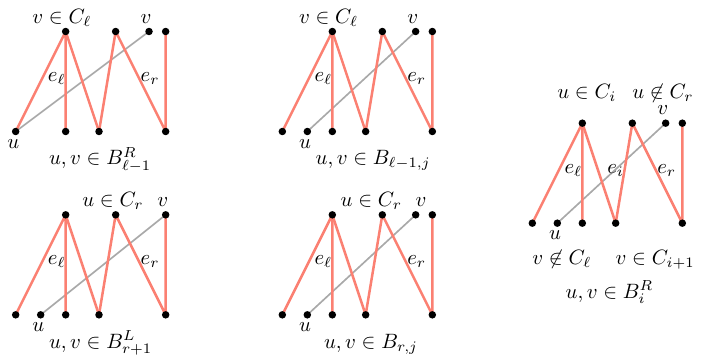}
      \caption{Examples for the different cases considered in the proof that every edge is in some bag.}
      \label{fig:every_edge_cases}
    \end{figure}

    \item[Each vertex's bags induce a subpath:] We distinguish between vertices based on their inclusion in the maximal plane subgraph $H$ and the left-K\H{o}nig covers.

    \begin{itemize}
        \item \textbf{Vertices not in $H$ and not in any cover.} 
        Let $u$ be such a vertex. Since $u \notin V(H) \cup \bigcup C_i$, $u$ appears exclusively in the bag $B_{i,j}$ corresponding to its specific position. %A single bag forms a trivial subpath.
        
        \item \textbf{Vertices not in $H$, but in some cover.}
        Let $u$ be a vertex not incident to any edge in $H$. Therefore, $u$ is located strictly between two edges $e_{j}$ and $e_{j+1}$.
        For any $i > j$, $u$ lies to the left of $e_i$ ($u \in L_i$). If $u \in C_i$, then by \Cref{lem:Koenig-cover}~(II) we have $u \in C_{i-1}$ down to $C_{j+1}$.
        Symmetrically, for any $i \le j$, $u$ lies to the right of $e_i$ ($u \in R_i$). If $u \in C_i$, then $u \in C_{i+1}$ up to $C_j$.
        Thus, the index set $\{p \mid u \in C_p\}$ is a contiguous interval, and the bags containing $u$ induce a subpath.
        
        \item \textbf{Vertices in $H$.}
        Let $u \in V(H)$. Since $H$ is a plane subgraph consisting of caterpillars, the edges of $H$ incident to $u$ form a contiguous subsequence $e_\ell, e_{\ell+1}, \ldots, e_r$.
        Consequently, $u$ is explicitly contained in every bag from $B^L_\ell$ to $B^R_r$ (the \emph{spine interval}).
        
        Note that $u$ cannot be in $C_i$ for any $\ell \leq i \leq r$, because the sets $L_i$ and $R_i$ exclude the endpoints of $e_i$. However, $u$ may belong to covers outside this range.
        If $u \in C_p$ for some $p > r$, then $u$ lies to the left of $e_p$. By \Cref{lem:Koenig-cover}~(II), $u$ must also be in $C_{r+1}, \ldots, C_{p-1}$. Since $u \in C_{r+1}$, it is included in the bag $B^R_r$ (which contains $C_{r+1}$). This connects the spine interval to the right.
        Symmetrically, if $u \in C_p$ for some $p < \ell$, monotonicity implies $u \in C_{p+1}, \ldots, C_{\ell-1}$. Since $u \in C_{\ell-1}$, it is included in $B^L_\ell$, connecting the spine interval to the left. Therefore, the union of bags containing $u$ forms a subpath. \qedhere
    \end{itemize}
\end{description}
\end{proof}

\begin{figure}[t]
  \centering
  \includegraphics[]{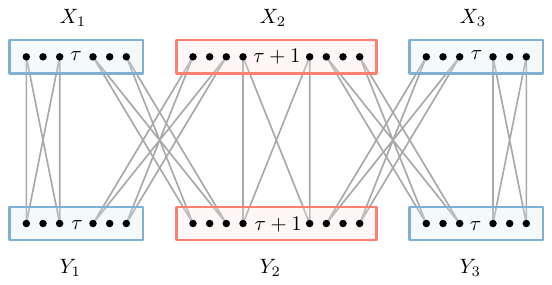}
  \caption{Illustration of the construction of the graph $G_k$.}
  \label{fig:lowerbound}
\end{figure}

\subparagraph*{A Lower Bound on the Pathwidth.} We construct a graph with a $2$-layer $k$-matching-planar drawing that has a $K_{3\lfloor k/2\rfloor+2}$ minor.
Let $\tau=\lfloor\frac{k}{2}\rfloor$.
Define a bipartite graph $G_k = (X \cup Y, E)$ with bipartition $X$ and $Y$, where each side is partitioned into three disjoint blocks:
\[
X = X_{1} \,\dot\cup\, X_{2} \,\dot\cup\, X_{3},
\qquad
Y = Y_{1} \,\dot\cup\, Y_{2} \,\dot\cup\, Y_{3},
\]
and the block sizes are
$\lvert X_{1}\rvert = \lvert X_{3}\rvert = \tau$, $\lvert X_{2}\rvert  = \tau+1$, $\lvert Y_{1}\rvert = \lvert Y_{3}\rvert = \tau$, and $\lvert Y_{2}\rvert = \tau+1$.

The edge set $E$ consists of all edges between the following
pairs of blocks:
\[
E = E(X_{1},Y_{1}) \,\cup\, 
    E(X_{1},Y_{2}) \,\cup\,
    E(X_{2},Y_{1}) \,\cup\,
    E(X_{2},Y_{2}) \,\cup\,
    E(X_{2},Y_{3}) \,\cup\,
    E(X_{3},Y_{2}) \,\cup\,
    E(X_{3},Y_{3}),
\]
where for any two vertex sets $U$ and $V$, $E(U,V) := \{\, uv : u\in U,\ v\in V \,\}$ (see \cref{fig:lowerbound}).

\begin{restatable}{theorem}{pathwidthlower}
\label{thm:pathwidth-lower}%{\let\thefootnote\relax\footnotetext{\hspace{-3.1mm}$^\star$ \hspace{0.4mm}The proofs of the results marked with $\star$ are only given in the full version due to space restrictions.}}
    For any $k$ the graph $G_k$ is $2$-layer $k$-matching-planar and $\operatorname{pw}(G_k) = 3\lfloor \tfrac{k}{2}\rfloor+1$.
\end{restatable}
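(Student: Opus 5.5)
We need to prove three things: $G_k$ has a $2$-layer $k$-matching-planar drawing, $\operatorname{pw}(G_k)\ge 3\tau+1$, and $\operatorname{pw}(G_k)\le 3\tau+1$, where $\tau=\lfloor k/2\rfloor$.

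\textbf{The drawing.} Place the blocks left to right in the order $X_1,X_2,X_3$ on the top layer and $Y_1,Y_2,Y_3$ on the bottom layer. We will then bound, for every edge $e$, the matching number of the edges crossing $e$ by exhibiting a vertex cover of size at most $2\tau\le k$ and invoking K\H{o}nig's theorem. The within-block orders are chosen to reduce crossings; for instance, inside the complete bipartite pieces between $X_2$ and $Y_2$ one can use mirrored orders. A case check follows.
\begin{itemize}
\item An edge in $E(X_1,Y_1)$ is crossed only by edges with an endpoint in $X_1\cup Y_1$, namely edges of $E(X_1,Y_1)$, $E(X_1,Y_2)$ and $E(X_2,Y_1)$. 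All of these are covered by $X_1\cup Y_1$, which has size $2\tau$.
\item An edge of $E(X_1,Y_2)$ crosses edges of $E(X_2,Y_1)$ and of $E(X_2,Y_2)$, $E(X_1,Y_2)$, $E(X_1,Y_1)$. Every edge of these types has an endpoint in $X_1\cup Y_1$ except those of $E(X_2,Y_2)$, so this cover fails as stated. Instead I would use $X_1\cup Y_2$ together with $Y_1$ restricted appropriately, exploiting that a crossing edge must go from the part of $X$ right of $x$ to the part of $Y$ left of $y$.
\item The middle block $E(X_2,Y_2)$ is the tight case. An edge $xy$ there is crossed by edges from $\{x'\in X: x'\succ x\}$ to $\{y'\in Y: y'\prec y\}$ and vice versa. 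With $x$ at position $i$ in $X_2$ and $y$ at position $j$ in $Y_2$, a cover of size about $\tau+\text{const}$ per side should exist, using sides such as $X_3$ or $Y_1$ together with the partial blocks.
\end{itemize}
I expect this verification to be fiddly but routine. The sizes $\tau$ versus $\tau+1$ are presumably chosen exactly so that each cover has at most $2\tau$ vertices. I will organize it by computing, for each edge $xy$, the bipartite crossing graph between the set $A$ of vertices right of $x$ and the set $B$ of vertices left of $y$ (and symmetrically), and taking the smaller side of each component.

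\textbf{Lower bound.} The plan is to exhibit a clique minor, or a bramble, of order $3\tau+2$, and use $\operatorname{pw}\ge\operatorname{tw}\ge$ (Hadwiger number)$-1$.
\begin{itemize}
\item Pair $X_1$ with $Y_1$ through a perfect matching, giving $\tau$ branch sets; $X_1\cup Y_1$ induces $K_{\tau,\tau}$.
\item Pair $X_3$ with $Y_3$ in the same way, giving $\tau$ more branch sets.
\item Pair the $\tau+1$ vertices of $X_2$ with those of $Y_2$ by a matching, giving $\tau+1$ branch sets.
\end{itemize}
This yields $3\tau+1$ branch sets, each an edge $ab$. They are pairwise adjacent:
\begin{itemize}
\item Two branch sets inside block 1 are adjacent via $E(X_1,Y_1)$.
\item A block-1 set and a block-2 set are adjacent via $E(X_1,Y_2)$.
\item A block-1 set and a block-3 set are not adjacent: $X_1$ sees only $Y_1,Y_2$, and $Y_1$ sees only $X_1,X_2$.
\end{itemize}
So this gives only $K_{2\tau+1}$-type structure, and I must instead use a bramble or separator argument. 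I would argue directly: in any path decomposition, consider the bag at the moment some vertex of $X_1\cup Y_1$ and some vertex of $X_3\cup Y_3$ are both ``active''. The dense middle $K_{\tau+1,\tau+1}$ together with its complete joins to the outer blocks forces a bag of size $3\tau+2$.

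Concretely, I would use the bramble whose elements are $\{v\}\cup\{w\}$ for edges, plus connected sets mixing blocks, and show that every hitting set has size at least $3\tau+2$. Equivalently, I would show that every balanced separator of the vertex set into ``left-done'' and ``right-to-do'' parts has size at least $3\tau+1$ via a vertex-ordering (vertex separation number) argument. Take an optimal linear ordering and the first time $\tau+1$ vertices of $X_2\cup Y_2$ have been placed; a counting argument on neighbors that must stay active should give the bound. This lower bound is the main obstacle, and I would try the vertex separation formulation first.

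\textbf{Upper bound.} This is easy to exhibit directly. Keep $X_2\cup Y_2$ minus one vertex (the $2\tau+1$ middle vertices) in every bag. Then sweep through block 1 as in the standard $K_{\tau,\tau}$ decomposition, with bags $X_1\cup\{y\}$, and likewise through block 3. The resulting bags have size at most $2\tau+1+\tau+1=3\tau+2$, so the width is $3\tau+1$. The distinguished middle vertex can be handled within one phase.
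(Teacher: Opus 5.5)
\textbf{Gap: the lower bound.} The genuine gap is the lower bound $\operatorname{pw}(G_k)\ge 3\tau+1$. Your argument ends with ``a counting argument \dots\ should give the bound'', and no bramble, hitting-set bound or vertex-separation estimate is actually established.

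Your clique-minor attempt fails structurally, not technically. Pairing each block $X_i$ with $Y_i$ yields only $3\tau+1$ branch sets. Even if they were pairwise adjacent you would get $K_{3\tau+1}$ and only $\operatorname{pw}\ge 3\tau$. And, as you noticed, the block-1 and block-3 sets are not adjacent.

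The missing idea is to let the middle $K_{\tau+1,\tau+1}$ contribute $2\tau+2$ branch sets, one rooted at each vertex of $X_2\cup Y_2$, and to use the outer blocks only as connectors:
\begin{itemize}
\item Contract $Y_1$ into $X_1$ along a perfect matching of $E(X_1,Y_1)$.
\item Contract $Y_3$ into $X_2$ along a matching of $E(X_2,Y_3)$ saturating $Y_3$.
\item Contract $X_3$ into $Y_2$ along a matching of $E(X_3,Y_2)$ saturating $X_3$.
\end{itemize}
Each absorbed vertex of $Y_1$ sees all of $X_1\cup X_2$, so $X_1$ becomes a clique completely joined to $X_2$; it is already joined to $Y_2$. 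Each absorbed vertex of $Y_3$ sees all of $X_2$, and only one vertex of $X_2$ absorbs nothing, so $X_2$ becomes a clique. Symmetrically, $Y_2$ becomes a clique, and $E(X_2,Y_2)$ is complete. This gives a $K_{\tau+(\tau+1)+(\tau+1)}=K_{3\tau+2}$ minor, hence $\operatorname{pw}(G_k)\ge\operatorname{tw}(G_k)\ge 3\tau+1$.

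\textbf{Drawing and upper bound.} Your drawing uses the same layout as the paper, but you left the two nontrivial cases open. They are one-liners rather than fiddly, and the within-block orders play no role.
\begin{itemize}
\item For $a\in X_1$, $b\in Y_2$: every edge crossing $ab$ has an endpoint in $(Y_1\cup Y_2)\setminus\{b\}$. An edge starting left of $a$ must end in $Y_2$ to the right of $b$, and any other crossing edge ends left of $b$.
\item For $a\in X_2$, $b\in Y_2$: every crossing edge has an endpoint in $(X_2\cup Y_2)\setminus\{a,b\}$. An edge from $X_1$ must end in $Y_2$ right of $b$, an edge into $Y_1$ must start in $X_2$ right of $a$, and symmetrically for $X_3$ and $Y_3$.
\end{itemize}
Both sets have size $2\tau\le k$, and the remaining cases follow by symmetry.

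Your explicit decomposition for the upper bound can be completed. For example, keep $(X_2\cup Y_2)\setminus\{y^*\}$ in every bag for a fixed $y^*\in Y_2$. Sweep with $X_1\cup\{y\}$ for $y\in Y_1$, then $X_3\cup\{y\}$ for $y\in Y_3$, and add $y^*$ to one extra bag at the end of the first sweep and one at the start of the second. It is simpler, as the paper does, to note $G_k\subseteq K_{3\tau+1,3\tau+1}$ and $\operatorname{pw}(K_{n,n})=n$.
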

\begin{proof}

First, we describe a $2$-layer drawing of $G_k$ and show that it is $k$-matching-planar. Place the vertices of $X$ on a horizontal line $\ell_X$ and the vertices of $Y$ on a parallel horizontal line $\ell_Y$ below $\ell_X$.
Along $\ell_X$, order the vertices from left to right so that all vertices of $X_1$ come first, followed by all vertices of $X_2$, and finally all vertices of $X_3$. Within each block the order is arbitrary.
Similarly, along $\ell_Y$, order the vertices from left to right so that all vertices of $Y_1$ come first, followed by all vertices of $Y_2$, and finally all vertices of $Y_3$, again with arbitrary order inside each block. Draw every edge of $G_k$ as a straight-line segment between its endpoints. This is a valid $2$-layer drawing of $G_k$.

Let $e=ab$ with $a\in X$ and $b\in Y$. Two edges cross if and only if their endpoints appear in opposite orders on the two layers.
The drawing is symmetric under reflection and under exchanging the two layers.
Hence, up to symmetry, it suffices to consider the following three cases.

\begin{itemize}
    \item \textbf{Case 1: $a\in X_1$, $b\in Y_1$.}
    Every edge crossing $e$ has at least one endpoint in the set $(X_1\cup Y_1)\setminus\{a,b\}$, which has size $2\tau-2\leq k-2$.
    \item \textbf{Case 2: $a\in X_1$, $b\in Y_2$.}
    Every edge crossing $e$ has at least one endpoint in the set $(Y_1\cup Y_2 )\setminus\{b\}$, which has size $2\tau\leq k$.
    \item \textbf{Case 3: $a\in X_2$, $b\in Y_2$.}
    Every edge crossing $e$ has at least one endpoint in the set
$(X_2\cup Y_2)\setminus\{a,b\}$, which has size $2\tau \leq k$.

\end{itemize}

\noindent Thus any matching of crossing edges of $e$ has size at most $k$. Therefore, the described $2$-layer drawing of $G_k$ is $k$-matching-planar.

Note that $\lvert X\rvert=\lvert Y\rvert=3\tau+1=3\lfloor k/2\rfloor+1$.
Let $n=3\tau+1$.
As $G_k$ is a subgraph of the complete bipartite graph $K_{n,n}$ we have $\operatorname{pw}(G_k) \le \operatorname{pw}(K_{n,n}) = n = 3\lfloor k/2\rfloor+1$.

We now show that this is the minimum possible width by
finding a $K_{3\tau+2}$ minor in $G_k$. First we take a maximum matching on the edges $E(X_1,Y_1)$ and contract every vertex of $Y_1$ along its matching edge to the other endpoint in $X_1$. As for every pair $u$, $v$ of vertices in $X_1$ there was a vertex $y$ in $Y_1$ that was contracted to $u$ and had edges to both $u$ and $v$, there now is a direct edge between $u$ and $v$. Therefore, the vertices of $X_1$ now form a complete graph. Note that every vertex in $X_1$ now has an edge to every vertex in $X_2$ and $Y_2$. Similarly, we take maximum matchings on $E(X_2,Y_3)$ and $E(X_3,Y_2)$ and contract the vertices of $X_3$ and $Y_3$ along their matching edges to the corresponding endpoints in $X_2$ and $Y_3$. This results in a complete graph on the vertices of $X_2$ as well as a complete graph on the vertices of $Y_2$. Since between $X_2$ and $Y_2$ there is a complete bipartite graph and $X_1$ has edges to all vertices of $X_2\cup Y_2$ the resulting graph is a complete graph on $3\tau +2$ vertices.
That is $\operatorname{pw}(G_k)\ge 3\tau+1 =  3\lfloor k/2\rfloor+1$.
\end{proof}

\section{NP-hardness of One-Sided $k$-Matching-Planarity}
\label{sec:np-hardness}

In \cite{OneSidedKPlanar} it was proven that the problem of recognizing $2$-layer $k$-planar graphs given an order on the vertices of one side is NP-hard. This was shown by a reduction from \textsc{$k$-way Partition}, which is strongly NP-hard, as it is a special case of \textsc{3-Partition}. Using similar ideas, we show that the problem \textsc{OneSided-$k$-MatchingPlanarity} is NP-hard as well. The problem asks, given a bipartite graph $G=(X\cup Y, E)$ and an order $\prec_X$ on the vertices of $X$ and some $k$, if there is a linear order $\prec_Y$ on $Y$ so that the $2$-layer drawing induced by $\prec_X$ and $\prec_Y$ is $k$-matching-planar.

\begin{center}
\begin{tcolorbox}[width=13cm]
\begin{center}
 \begin{tabular}{rl}
\textbf{Problem:} & \textsc{OneSided-$k$-MatchingPlanarity}\\
 \textbf {Given:} & A bipartite graph $G=(X\cup Y, E)$, a linear order $\prec_X$ on $X$ and an\\ & integer $k\geq 0$\\
 \textbf {Question:} & Is there a linear order $\prec_Y$ on $Y$ so that the $2$-layer drawing of $G$\\
 & induced by $\prec_X$ and $\prec_Y$ is $k$-matching-planar? \end{tabular}
\end{center}
\end{tcolorbox}
\end{center}

\begin{theorem}\label{thm:onesided-nphard}
    The problem \textsc{OneSided-$k$-MatchingPlanarity} is NP-complete.
\end{theorem}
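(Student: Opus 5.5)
Membership in NP is immediate. A certificate is the order $\prec_Y$. Given it, for each edge $e$ we compute the set $S_e$ of edges crossing $e$ (two edges cross iff their endpoints appear in opposite orders on the two layers) and the maximum matching of the bipartite graph $(V,S_e)$, for instance with Hopcroft--Karp. All of this takes polynomial time.

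For hardness we reduce from \textsc{$k$-way Partition}, following \cite{OneSidedKPlanar}. This problem is strongly NP-hard, so all numbers may be assumed to be encoded in unary. An instance consists of positive integers $a_1,\dots,a_n$ with $\sum a_i = kB$ (we write $q$ for the number of parts to avoid a clash with the planarity parameter). The question is whether they can be split into $q$ groups, each of sum exactly $B$.

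The fixed order $\prec_X$ will encode a frame of $q$ buckets. Consecutive buckets are separated by \emph{anchor gadgets}. Each anchor is a complete bipartite graph $K_{s,t}$ with $s$ consecutive vertices of $X$, where $s$ and $t$ are chosen in terms of the planarity parameter $k'$ that we output. An anchor should act as a wall: its $Y$-vertices are forced to sit between the $X$-positions of the anchor, and any edge that passes across the wall already meets a matching of size close to $k'$ inside the anchor. The anchors are linked into a path-like skeleton by connecting them through shared vertices, so that their relative order on $Y$ is forced as well.

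Each number $a_i$ becomes an item gadget. Its $Y$-vertex (or small group of $Y$-vertices) is adjacent to a long-range "probe" $X$-vertex placed far to one side, or to a set of $X$-vertices spread across all buckets. The effect is that placing the item inside bucket $j$ causes its edges to cross a designated \emph{measuring edge} of bucket $j$ with exactly $a_i$ pairwise non-adjacent edges. The natural way to achieve this is to give item $i$ a matching of $a_i$ edges to distinct $X$-vertices that span every bucket. The budget $k'$ is then calibrated so that each measuring edge tolerates an extra matching of size exactly $B$ on top of the fixed contribution of the frame.

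Correctness has two directions. Given a partition, we place the items bucket by bucket and check every edge type: anchor edges, frame edges, measuring edges, and item edges. The check mirrors the case analysis in \cref{thm:pathwidth-lower}: for each edge we exhibit a vertex cover of its crossing set of size at most $k'$, which by K\H{o}nig's theorem bounds the matching number. Conversely, given a $k'$-matching-planar order, we first argue that the anchors are rigid. Then each item lies entirely within one bucket, since splitting an item or placing it across a wall would create a matching of size greater than $k'$ crossing an anchor edge. Finally, the measuring edges force each bucket's sum to be at most $B$, and since the total is $qB$, every bucket sum is exactly $B$.

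The main obstacle is the rigidity argument. Unlike the $k$-planar setting, a saturated edge can still absorb arbitrarily many crossings from pairwise adjacent edges, such as a star. We therefore have to make sure every gadget's edges that might cross a wall contain large matchings. This is why the anchors are complete bipartite and the items are given matching structure rather than star structure. The first thing I would establish is a lemma of the following form: in any $k'$-matching-planar drawing, the $Y$-vertices of $K_{s,t}$ with $s,t$ somewhat larger than $k'/2$ must lie in the interval spanned by its $X$-vertices, and no matching of size at least $2$ from outside can cross through it. I would prove this by counting the non-adjacent edges inside $K_{s,t}$ that cross a misplaced edge.
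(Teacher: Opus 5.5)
\textbf{There is a genuine gap.} Your NP-membership argument is fine. The hardness part, however, is a plan rather than a reduction: no gadget sizes, no budget $k'$ and no concrete measuring edges are fixed, so none of the claimed case checks can be carried out. The mechanism you do describe also does not fit together.

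\emph{Bucket choice.} Suppose buckets are intervals of the \emph{fixed} order $\prec_X$, separated by rigid walls that no outside matching of size $2$ may cross. Then an item, whose $X$-vertices also sit at fixed positions, cannot choose its bucket. Giving it a matching of $a_i$ edges to $X$-vertices spread over all buckets produces exactly such a forbidden matching through the walls.

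\emph{Item size.} Your other option, one $Y$-vertex or a small group attached to a probe vertex, is essentially a star and contributes $1$, not $a_i$, to any matching. An item's contribution is bounded by its number of vertices on either layer, so it needs at least $a_i$ vertices on \emph{both} layers.

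\emph{Cohesion.} A matching gadget has no cohesion: its degree-one $Y$-vertices can be spread over several buckets independently. That turns the instance into a trivially solvable fractional partition. Your claim that splitting overloads an anchor edge is not supported by anything in the design.

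\emph{Rigidity lemma.} The lemma is off quantitatively. In $K_{s,t}$, the edge from the leftmost $X$-vertex to the rightmost $Y$-vertex is crossed by a $K_{s-1,t-1}$, so it is saturated only when $\min(s,t)-1=k'$. With $s,t$ near $k'/2$ there is still room for roughly $k'/2$ further independent crossings.

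\textbf{The missing idea} is to delimit buckets by \emph{free} $Y$-vertices rather than by $\prec_X$. In the paper's notation there are $k$ parts, target $T$ and numbers $s_j$. All item $X$-vertices form one central block $A^X$. For each $i<k$ there is a path $p_ip'_ip''_i$ with $p_i$ left of $A^X$ and $p''_i$ right of it. Item $j$ lies in bucket $i$ when $A^Y_j$ is between $p'_{i-1}$ and $p'_i$ in $\prec_Y$. Item $j$ is a complete bipartite $K_{ns_j,knT}$, so on one side of $p'_i$ it costs exactly $ns_j$. Every $A$-edge crosses exactly one of the two complementary measuring edges $p_ip'_i$ and $p'_ip''_i$.

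Staircase gadgets $Q_i,Q'_i$ (copies of $K_{nT+1,nT+1}$) are attached by matchings to the saturated blocks $B,B'$ (copies of $K_{k'+1,k'+1}$). With $k'=knT+k-1$, they provide offsets so that the matching numbers of the edge sets crossing $p_ip'_i$ and $p'_ip''_i$ sum to exactly $2k'$ in the intended configuration. Each deviation pushes this sum above $2k'$: a misplaced $Q^Y$-vertex, a misordered $p'$, or an item split across some $p'_i$ (which adds at least one). This tightness forces $Q^Y\prec_Y p'_1\prec_Y\dots\prec_Y p'_{k-1}\prec_Y Q'^Y$ and forbids splitting. An induction over $i$ then shows that each bucket sums to exactly $T$. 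Your sketch has no substitute for this exactly calibrated complementary pair of measuring edges.
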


\begin{proof}
The containment in NP is easy to see as testing whether a graph is $k$-matching-planar can be done in polynomial time. To show the hardness, we reduce from the problem \textsc{$k$-way Partition}. The construction is similar to the construction used in \cite{OneSidedKPlanar}, though our construction is somewhat more involved and needs some additional structures.

\begin{center}
\begin{tcolorbox}[width=13cm]
\begin{center}
 \begin{tabular}{rl}
\textbf{Problem:} & \textsc{$k$-way Partition}\\
 \textbf {Given:} & A set of $n$ positive integers $S=\{s_1,\dots,s_{n}\}$ and an integer $k\geq 2$\\
 \textbf {Question:} & Is there a partition of $S$ into $k$ sets $S_1,\dots,S_k$ such that,\\
 & for each set the sum of its elements is exactly $T=\frac{1}{k}\sum_{s\in S}s$? \end{tabular}
\end{center}
\end{tcolorbox}
\end{center}

\paragraph*{The construction}
We ask if there is a $k'$-matching-planar drawing by choosing $k' = knT+k-1$.
The bipartite graph $G=(X\cup Y, E)$ is constructed from $\mathcal{S}$ as follows: 
The set $X$ will consist of the following sets of vertices (see \Cref{fig:np_hardness}).
\begin{gather*}
    B^X=\{b^X_1,\dots, b^X_{k'+1}\}, \quad B'^X=\{b'^X_1,\dots, b'^X_{k'+1}\},\\
    Q^X_1=\{q^X_{1,1},\dots, q^X_{1,nT+1}\}, \dots, Q^X_{k-1}=\{q^X_{k-1,1},\dots, q^X_{k-1,nT+1}\},\\
    Q'^X_1=\{q'^X_{1,1},\dots, q'^X_{1,nT+1}\}, \dots, Q'^X_{k-1}=\{q'^X_{k-1,1},\dots, q'^X_{k-1,nT+1}\},\\
    A^X_1=\{a^X_{1,1},\dots,a^X_{1,ns_1}\}, \dots, A^X_{n}=\{a^X_{n,1},\dots,a^X_{n,ns_{n}}\},\\
    \{p_1,\dots,p_{k-1}\}, \quad \{p''_1,\dots,p''_{k-1}\}
\end{gather*}
By $A^X$ we denote $\bigcup_{1\leq i\leq n}A^X_i$, by $Q^X$ we denote $\bigcup_{1\leq i\leq k-1}Q^X_i$ and by $Q'^X$ we denote $\bigcup_{1\leq i\leq k-1}Q'^X_i$.
The set of vertices $Y$ for the other side consists of the following vertex sets: 
\begin{gather*}
    B^Y=\{b^Y_1,\dots, b^Y_{k'+1}\}, \quad B'^Y=\{b'^Y_1,\dots, b'^Y_{k'+1}\},\\
    Q^Y_1=\{q^Y_{1,1},\dots, q^Y_{1,nT+1}\}, \dots, Q^Y_{k-1}=\{q^Y_{k-1,1},\dots, q^Y_{k-1,nT+1}\},\\
    Q'^Y_1=\{q'^Y_{1,1},\dots, q'^Y_{1,nT+1}\}, \dots, Q'^Y_{k-1}=\{q'^Y_{k-1,1},\dots, q'^Y_{k-1,nT+1}\},\\
    A^Y_1=\{a^Y_{1,1},\dots,a^Y_{1,knT}\}, \dots, A^Y_{n}=\{a^Y_{n,1},\dots,a^Y_{n,knT}\},\\
    \{p'_1,\dots,p'_{k-1}\}
\end{gather*}

Again, we use $A^Y$ and $Q^Y$ and $Q'^Y$ to denote the union of the corresponding sets.
The sets $B^X$ and $B^Y$ form a complete bipartite graph, as well as the sets $B'^X$ and $B'^Y$. For every $i$, with $1\leq i<k$, the sets $Q_i^X$ and $Q_i^Y$ as well as $Q_i'^X$ and $Q_i'^Y$ also form a complete bipartite graph. For every $i$ and $j$, with $1\leq i<k$ and $1\leq j\leq nT+1$, an edge is placed between $q_{i,j}^Y$ and $b^X_{(i-1)\cdot (nT+1)+j}$ as well as between $q_{i,j}'^Y$ and $b'^X_{i\cdot (nT+1)+j}$. Also for every $i$, where $1\leq i \leq n$, the sets $A^X_i$ and $A^Y_i$ form complete bipartite graphs. Finally, for every $i$, where $1\leq i <k$, there is an edge between $p_i$ and $p'_i$ as well as between $p'_i$ and $p''_i$. The linear order $\prec_X$ on $X$ is defined as follows:
\begin{align*}
    B^X&\prec_X p_1\prec_XQ_1^X\prec_X p_2\prec_X\dots\prec_XQ_{k-1}^X\prec_XA_1^X\prec_X\dots\prec_XA^X_{n}\prec_XQ_1'^X\prec_X p_1''\prec_X\\
    &\dots\prec_XQ_{k-1}'^X\prec_Xp''_{k-1}\prec_XB'^X
\end{align*}
where the order on sets $B^X,Q_1^X,\dots,A^X_1,\dots$ is defined according to the indices of their respective elements.

\begin{figure}[htbp]
  \centering
  \includegraphics{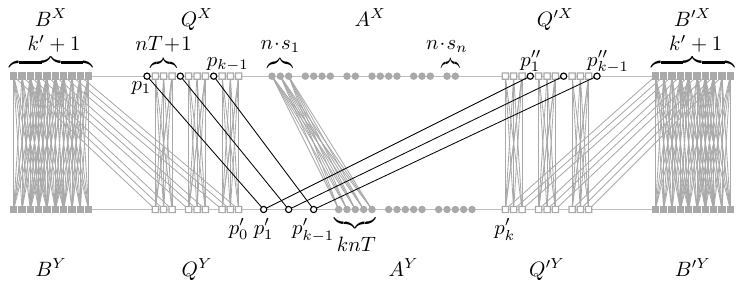}
  \caption{Illustration of the construction for the hardness proof.}
  \label{fig:np_hardness}
\end{figure}

\paragraph*{Soundness and Completeness}

First we argue that if there is a partition into $k$ sets $S_1,\dots, S_k$ such that for each $S_i$ the sum of its elements is $T$ then there is a linear order $\prec_Y$ on $Y$ such that the induced $2$-layer drawing of $G$ is $k'$-matching-planar. The linear order $\prec_Y$ will have to satisfy the following orderings:
\[B^Y\prec_Y Q_1^Y\prec_Y\dots\prec_YQ^Y_{k-1}\prec_Yp'_1\prec_Y\dots\prec_Yp'_{k-1}\prec_YQ'^Y_1\prec_Y\dots\prec_YQ'^Y_{k-1}\]
The vertices of $A^Y$ are sorted into $\prec_Y$ between $Q^Y_{k-1}$ and $Q'^Y_1$ according to $S_1,\dots, S_k$. For each $S_i=\{s_{i_{1}}, \dots, s_{i_{\ell}}\}$ we place the vertices of $A^Y_{{i_1}},\dots ,A^Y_{{i_\ell}}$ between $p'_{i-1}$ and $p'_{i}$ where $p'_0$ is the rightmost vertex of $Q^Y_{k-1}$ in $\prec_Y$ and $p'_k$ is the leftmost vertex of $Q'^Y_{1}$ in $\prec_Y$. We claim that the drawing induced by $\prec_X$ and $\prec_Y$ is $k'$-matching-planar. This is proven by considering  each crossed edge and distinguishing cases according to the location of its endpoints. 
\begin{itemize}
    \item $B^X$ and $B^Y$ respectively $B'^X$ and $B'^Y$: For each edge $e$ between $B^X$ and $B^Y$ all edges crossing $e$ have an endpoint in $B^X$. Hence, there are at most $\lvert B^X\rvert -1 = k'$ edges in any matching crossing $e$. The same conclusion holds for $B'^X$ and $B'^Y$.
    \item $B^X$ and $Q^Y$ respectively $B'^X$ and $Q'^Y$: Consider an edge $e=b q^Y_{i,j}$, where $b=b^X_{(i-1)\cdot (nT+1)+j}$. Such an edge is only crossed by edges having endpoints in $B^X$ that are to the right of $b$ and edges with endpoints in $Q^Y$ that are to the left of $q^Y_{i,j}$. This means that in total there are at most $k'+1-(i\cdot (nT+1)-(nT+1)+j)+(i-1)\cdot(nT+1)+j-1=k'$ edges in any matching crossing $e$. The same conclusion holds for edges between vertices of $B'^X$ and $Q'^Y$.
    \item $Q^X$ and $Q^Y$ respectively $Q'^X$ and $Q'^Y$: Let $e$ be an edge between vertices of $Q^X_i$ and $Q^Y_i$. The edge is crossed by the edges $p_jp'_j$, for any $j$ with $1\leq j\leq i$, as well as edges having endpoints in the other $nT$ many vertices in $Q^Y_i$. Additionally, there are $(k-1-i)(nT+1)$ edges between $Q_j^Y$ and $B^X$, for any $j$ with $i<j\leq k-1$, crossing $e$. This means that any matching crossing $e$ has size at most $i+nT+(k-1-i)(nT+1)=knT+k-1-i(nT) \leq  k'$. 
    The same  conclusion holds for edges between vertices of $Q'^X$ and $Q'^Y$.
    \item $p_ip'_i$ and $p'_ip''_i$: Suppose that $e=p_ip'_i$.
    The edges between $A_j^X$ and $A_j^Y$ cross~$e$ only if $s_j\in S_t$ for some $t\leq i$.
    So, the edges between $A^X$ and $A^Y$ that cross $e$ have $i(nT)$ common endpoints in $A^X$, as the sum of elements of each $S_t$ is $T$.
    An edge with an endpoint in $Q^X$ crosses $e$ only if it is between $Q^X_j$ and $Q^Y_j$ for some $j\geq i$.
    These edges have $(k-i)(nT+1)$ endpoints in $Q^X$.
    Finally, there are $i-1$ edges $p_jp_j'$ for $j< i$ crossing $e$. The size of any matching crossing $e$ is therefore at most $i(nT)+(k-i)(nT+1)+i-1=knT+k-1=k'$.
    The same conclusion holds for the edges  $p'_ip''_i$.
    \item $A^X$ and $A^Y$: Consider an edge $e$ between $A^X_i$ and $A^Y_i$. It is crossed by $k-1$ edges with some endpoint $p'_j$, for $1\leq j\leq k-1$. There is also a total of $knT-1$ vertices in $A^X$ that may be endpoints of edges crossing $e$. Therefore, there are at most $knT+k-1$ edges in any matching crossing $e$.
\end{itemize}

As each edge is crossed by a maximum matching of size at most $k'$ the $2$-layer drawing of $G$ induced by $\prec_X$ and $\prec_Y$ is $k'$-matching-planar.

Now suppose that there is a linear order $\prec_Y$ on $Y$ such that $\prec_Y$ and $\prec_X$ induce a $2$-layer $k'$-matching-planar drawing on $G$. To extract a $k$-way partition from $\prec_Y$ we first need to prove some structural properties that $\prec_Y$ must satisfy.

\begin{restatable}{lemma}{qordered}
\label{lem:q-ordered}
    For each $p_i'$ where $1 \leq i \leq k-1$, each $q_Y\in Q^Y$, and each $q_Y'\in Q'^Y$ it holds that $q_Y \prec_Y p_i'\prec_Y q_Y'$.
\end{restatable}
\begin{proof}
    Assume that there is some $p'_i$ for which there either is a $q_Y\in Q^Y$ for which $p_i'\prec_Y q_Y$ or there is a $q_Y'\in Q'^Y$ for which $q_Y'\prec_Y p_i'$. As the two cases are symmetric it suffices to assume that there is a $q_Y\in Q^Y$ for which $p_i'\prec_Y q_Y$. Since $p_i'\prec_Y q_Y$ there is an edge between $B^X$ and $q_Y$ that crosses $p_ip_i'$ and there are edges between $Q^X$ and $q_y$ that cross $p_i'p_i''$.
    %and there are edges between $q_Y$ and points from $Q'_i^Y$ that cross $p'_ip''_i$.
    There are also at least $k(nT+1)-1$ other vertices in $Q^Y\cup Q'^Y$ that are endpoints of edges that cross either $p_ip_i'$ or $p_i'p_i''$ and there are $knT$ vertices in $A^X$ that have edges that cross either $p_ip_i'$ or $p_i'p_i''$. Also, for each $p_\ell'$ where $\ell\neq i$ there is one edge adjacent to $p_\ell'$ that must cross either $p_ip_i'$ or $p_i'p_i''$. In total the size of maximum matchings for edges $p_ip_i'$ and $p_i'p_i''$ combined are $k(nT+1)-1+knT+(k-2)+2=2knT+2(k-1)+1$ and therefore one of them is crossed by a matching of size at least $k'+1$.
\end{proof}

\begin{restatable}{lemma}{pordered}
\label{lem:p-ordered}
    For every $p_i'$ and $p_j'$ where $i< j$ it must hold that $p_i'\prec_Y p_j'$.
\end{restatable}
\begin{proof}
This can be proven essentially the same way as in \cite{OneSidedKPlanar}. Suppose there are $p_i'$ and $p_j'$ where $i<j$ and $p_j' \prec p_i'$. Then $p_i'p_i$ and $p_j'p_j$ cross as well as $p_i'p_i''$ and $p_j'p_j''$. There is a total of $knT$ vertices in $A^X$ that have edges that either cross $p_ip_i'$ or $p_i'p_i''$. There is also a total of $k(nT+1)$ vertices in $Q^X\cup Q'^X$ that have edges that need to cross either $p_ip_i'$ or $p_i'p_i''$. Also, for each $p_\ell'$ where $\ell\neq i$ and $\ell\neq j$ there is one edge adjacent to $p_\ell'$ that must cross either $p_ip_i'$ or $p_i'p_i''$. In total the size of maximum matchings for edges $p_ip_i'$ and $p_i'p_i''$ combined are $2knT+k+(k-3)+2=2knT+2(k-1)+1$ and therefore one of them must have a maximum matching of size at least $k'+1$.
\end{proof}

\begin{restatable}{lemma}{anosplit}
\label{lem:a-no-split}
    If according to $\prec_Y$ there is some $a_Y\in A^Y_j$ between $p'_{i}$ and $p'_{i+1}$ where $1\leq i \leq k-2$ then every vertex of $A^Y_j$ must be between $p'_{i}$ and $p'_{i+1}$. The same is true analogously for $a_Y\in A^Y_j$ left of $p'_{1}$ and $a_Y\in A^Y_j$ to the right of $p'_{k-1}$. 
\end{restatable}
\begin{proof}

Suppose there is some $a_Y\in A^Y_j$ between $p'_{i-1}$ and $p'_{i}$ and some $a_Y'\in A^Y_j$ between $p'_{\ell}$ and $p'_{\ell+1}$ where $\ell> i$. There is a total of $k(nT+1)$ vertices in $Q^X\cup Q'^X$ that have edges that need to cross either $p_ip_i'$ or $p_i'p_i''$. There is also a total of $k-2$ edges $p_jp'_j$ respectively $p'_jp_j''$ for $j\neq i$ crossing either $p_ip'_i$ or $p'_ip''_i$. And finally there is a total of $knT$ vertices in $A^X$ that have edges crossing either $p_ip'_i$ or $p'_ip''_i$. One of these vertices will have an edge that crosses $p_ip'_i$ as well as an edge that crosses $p'_ip''_i$. In total the size of maximum matchings for edges $p_ip_i'$ and $p_i'p_i''$ combined are $k(nT+1)+knT+1+(k-2)=knT+k+knT+1+(k-2)=2knT+2(k-1)+1$ and therefore one of them must have a maximum matching of size at least $k'+1$.
\end{proof}

By Lemmas \ref{lem:q-ordered} and \ref{lem:p-ordered} we know that $\prec_Y$ must satisfy the following inequality:
\[Q^Y\prec_Yp'_1\prec_Y\dots \prec_Y p'_{k-1}\prec_Y Q'^Y\]
It is not hard to see that \textsc{$k$-way Partition} is strongly NP-hard as it is a special case of \textsc{3-Partition}, where for every element $s$ it holds that $T/4<s<T/2$, which is known to be strongly NP-hard~\cite{GareyJ79}. 
We claim that 
\[S_i=\{s_j \mid p'_{i-1}\prec_Y a_Y\prec_Y p'_{i},\ a_Y\in A_j^Y\},\]
where $p'_0$ is to the left of all vertices in $A^Y$ and $p'_k$ is to the right of all vertices in $A^Y$, is a partition of $\mathcal{S}$ for which the sum of every set is exactly $T$.
This can be shown similarly as in~\cite{OneSidedKPlanar} by induction on $i$. For $i=1$ we argue that for $p_1p'_1$ there are $(k-1)(nT+1)$ endpoints of edges crossing in $Q^X$. This means that there can be at most $nT$ endpoints in $A^X$ that have edges crossing $p_1p'_1$ and therefore it must hold that $\sum_{s\in S_1}s\leq T$. For the edge $p'_1p''_1$ there are $nT+1$ endpoints in $Q'^X$ that have edges crossing $p'_1p''_1$, as well as $k-2$ edges $p_jp'_j$ for $j\geq 2$. There are also $knT-n\sum_{s\in S_i}s$ endpoints in $A^X$ with edges crossing $p'_1p''_1$. In total this means there are 
\[nT+1+k-2+knT-n\sum_{s\in S_i}s=knT+k-1+nT-n\sum_{s\in S_i}s\leq k'\] vertices that have edges crossing $p'_1p''_1$, which means that $\sum_{s\in S_i}s\geq T$. For $i>1$ we can proceed similarly. By the induction hypothesis we know that there are $(i-1)nT$ endpoints in $A^X$ for edges that cross $p_{i-1}p'_{i-1}$ as well as $p_ip'_i$. There are also $(i-1)$ edges $p'_jp''_j$ for $j<i$ that cross $p_ip'_i$ and $(k-i)(nT+1)$ endpoints in $Q^X$ for edges that cross $p_ip'_i$. This means in total there are 
\[(i-1)nT+i-1+(k-i)(nT+1)+n\sum_{s\in S_i}s=knT+k-1-nT+n\sum_{s\in S_i}s\leq k'\]
endpoints in $X$ for edges crossing $p_ip'_i$ and $\sum_{s\in S_i}s\leq T$. For the edge $p'_ip''_i$ there are in total 
\[i(nT+1)+k-1-i+knT-(i-1)(nT)-n\sum_{s\in S_i}s=knT+k-1+nT-n\sum_{s\in S_i}s \leq k'\]
distinct endpoints in $X$ that have edges crossing $p'_ip''_i$, which means that $\sum_{s\in S_i}s\geq T$.

Hence, $\sum_{s\in S_i}s=T$ holds for $1\leq i \leq k-1$. For $i=k$ we can simply argue that there are exactly $nT$ vertices left in $A^X$ that have edges with endpoints to the right of $p'_{k-1}$. This concludes the proof of the hardness of \textsc{OneSided-$k$-MatchingPlanarity}.
\end{proof}

Using their reduction, the authors of \cite{OneSidedKPlanar} could show that under ETH there is no $2^{o(\vert Y\vert)}\text{poly}(\vert \mathcal{I}\vert)$-time algorithm for recognizing \textsc{OneSided $k$-Planarity}, where $\vert \mathcal{I}\vert$ denotes the size of the instance. This was a direct consequence of their construction.

For the reduction above we do not immediately obtain such a result. The main obstacle is that in the construction above the size of $Y$ is not bounded in terms of $n$.

By slightly tweaking the construction above, the reduction can be modified to a reduction from \textsc{3-Partition}. 

\section{FPT-algorithm for One-Sided $k$-Matching-Planarity}
\label{sec:fpt}

In this section we describe an FPT-algorithm to decide \textsc{OneSided-$k$-MatchingPlanarity}.
Our algorithm is similar to the algorithms presented in~\cite{KOW25} for recognizing $2$-layer $k$-planar graphs.
However, the adaptation to $k$-matching-planarity is non-trivial and needs different structural insights presented next.

\begin{restatable}{lemma}{redundant}\label{lem:redundant}
	Let $k$ be an integer, $G=(X\cup Y, E)$ be a bipartite graph, $u\in X\cup Y$, and $s=\lvert N(u)\rvert$.
	If there are at least $2s+1$ vertices $v\in X\cup Y$ with $N(u)=N(v)$, then $G$ has a $2$-layer $k$-matching-planar drawing if and only if $G-u$ has such a drawing.
\end{restatable}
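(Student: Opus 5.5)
The forward direction is immediate: deleting $u$ and its incident edges from a $2$-layer $k$-matching-planar drawing of $G$ keeps the drawing $2$-layer, and the set of edges crossing any remaining edge can only shrink, so no matching of size $k+1$ can appear. For the converse I would start from a $2$-layer $k$-matching-planar drawing $\Gamma$ of $G-u$ and reinsert $u$ next to a carefully chosen twin. Let $W=N(u)$, with $|W|=s$, lie on one layer, and let $T$ be the set of twins of $u$ other than $u$ on the other layer. I read the hypothesis as counting $u$ itself, so $|T|\ge 2s$. Order $T$ along its layer and pick $v\in T$ so that at least $s$ twins of $T\setminus\{v\}$ lie weakly on each side of $v$; more precisely, I would choose the split so that at least $s$ twins lie left of $u$ and at least $s$ lie right of $u$ once $u$ is placed. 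Then place $u$ immediately to the right of $v$, with no other vertex in between, and draw $uw$ for all $w\in W$.

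The key geometric fact is the following. Since $u$ and $v$ are consecutive, an edge $e$ not incident to $u$ or $v$ crosses $uw$ if and only if it crosses $vw$. For $e=xy$ there is a side criterion: if $x$ lies on the twin layer to the right of $u$ and $y$ lies on the other layer to the left of $w$, then every twin $t$ left of $u$ satisfies that $tw$ crosses $e$. Symmetrically, if $x$ is left of $u$, every twin to the right of $u$ works.

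Next comes the exchange argument for edges $e$ not incident to $u$. Let $M$ be a matching of edges crossing $e$ in the new drawing, and suppose $uw\in M$. Every edge of $M$ incident to a twin uses a distinct vertex of $W$, so at most $s$ edges of $M$ touch twins, and $uw$ is one of them. By the side criterion there are at least $s$ twins on the relevant side of $u$ whose edge to $w$ crosses $e$. At most $s-1$ of them are used by $M\setminus\{uw\}$, and possibly one more is an endpoint of $e$. I need to recheck this count carefully; if necessary I would use that a twin used by $M$ occupies a vertex of $W$ other than $w$, which gives $s-1$ usable slots. In any case some twin $t$ is unused, and $M-uw+tw$ is a matching of the same size crossing $e$ inside $\Gamma$. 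Hence $|M|\le k$.

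For the new edges $uw$ themselves, the edges crossing $uw$ are exactly those crossing $vw$, plus the edges $vw'$ with $w'$ on the appropriate side of $w$. A matching contains at most one such $vw'$, and I would exchange it for $tw'$ with an unused twin $t$ on the opposite side, using the same counting. This reduces the situation to a matching crossing some edge $t''w$ of $\Gamma$, where $t''$ is a twin adjacent in order to the relevant side; equivalently, I compare with the edge $vw$ and handle the possible conflicting edge at $v$. The main obstacle is this bookkeeping. Specifically, I must guarantee, in every case, a replacement twin that is on the correct side, unused by $M$, and distinct from the endpoints of $e$. I must also make sure that the $2s$ available twins can indeed be split as $s$ and $s$ around the insertion point. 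That split is exactly where the bound $2s+1$ is needed.
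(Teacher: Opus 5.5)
Your approach is correct, and it is essentially the matching-side dual of the paper's argument. The paper uses the same placement ($u$ immediately right of $v_s$, so $s$ twins lie on each side of $u$) and the same split into edges not incident to $u$ and edges $xu$. It differs in working with vertex covers rather than matchings. By K\H{o}nig it takes a cover $S'$ of size at most $k$ for the edges crossing $e$ in $D'$ and replaces it by $(S'\cup X')\setminus\{v_1,\dots,v_s\}$, where $X'$ is the set of neighbours $x'$ of $u$ with $ux'$ crossing $e$. The complete bipartite graph between $\{v_1,\dots,v_s\}$ and $X'$ also crosses $e$, so $S'$ contains one whole side and the cover does not grow. For $e=xu$ it does the same with a cover for $xv_s$, removing $v_1,\dots,v_{s-1}$. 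The cover version handles all of $M$ at once and needs no counting of free twins. Your exchange version avoids K\H{o}nig and stays with the definition, at the cost of exactly the counting you left open.

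That counting does close, more easily than you feared.

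\emph{Edges $e=xy$ not incident to $u$, with $uw\in M$.} The twins on the relevant side of $u$ are never endpoints of $e$, because the twin-layer endpoint $x$ of $e$ lies on the other side of $u$. So there are $s$ candidates. The edges of $M\setminus\{uw\}$ at twins use distinct vertices of $W\setminus\{w\}$, so they occupy at most $s-1$ twins, and one candidate is free.

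\emph{Edges $e=uw$.} Every edge of $M$ at $v$ has the form $vw'$ with $w\prec w'$. The replacement twin must lie strictly left of $v$: for a twin $t$ right of $u$, the edge $tw'$ does not cross $vw$. So only $s-1$ candidates exist. The extra observation you need is that no edge of $M$ is incident to $w$, since it would share an endpoint with $uw$. Hence $M\setminus\{vw'\}$ occupies at most $s-2$ twins, via $W\setminus\{w,w'\}$. Then $M-vw'+tw'$ is a matching crossing $vw$ in $\Gamma$, and no detour through an edge $t''w$ is needed.

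Finally, treat $s=0$ separately, since then $T$ may be empty and there is no $v$ to insert $u$ after.
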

\begin{proof}
	The statement is clear if $s=0$, so consider $s\geq 1$.
	If $G$ has a $2$-layer $k$-matching-planar drawing then clearly $G-u$ has such a drawing.
	So assume that $G-u$ has a $2$-layer $k$-matching-planar drawing $D'$.
	By assumption, there are at least $2s$ vertices $v_1,\ldots,v_{2s}$ in $G-u$ with $N(v_i)=N(u)$ for each $i$.
	Note that the neighborhood does not change from $G$ to $G-u$ as $G$ is bipartite.
	Assume that $v_1,\ldots,v_{2s}$ appear in this order in $D'$.
	We claim that inserting $u$ directly to the right of $v_s$ yields a $k$-matching-planar drawing $D$ of $G$.
	To see this we will find, for each edge $e$ in $G$, a vertex cover of size at most $k$ for the edges crossing $e$ in~$D$.
    
	First consider an edge $e=xy$ not incident to $u$.
	Then there is a vertex cover $S'$ of size at most $k$ for the edges crossing $e$ in $D'$.
	Without loss of generality, we assume that $x\in X$ and $y$, $u\in Y$ with $u\prec_Y y$.
	Let $X'=\{x'\in N(u)\colon x \prec_X x'\}$ denote the set of neighbors $x'$ of $u$ such that the edge $x'u$ crosses $e$ in $D$.
    Then $\lvert X'\rvert \leq s$.
	By the assumption $u\prec_Y y$, the set $S=(S'\cup X')\setminus \{v_1,\ldots,v_s\}$ is a vertex cover of the edges crossing $e$ in $D$.	   
	Moreover, all edges of the complete bipartite graph between $\{v_1,\ldots,v_s\}$ and $X'$ cross $e$ in $D'$ and, hence, either all vertices from $X'$ or all vertices from $\{v_1,\ldots,v_s\}$ are in $S'$.
    This shows $\lvert S\rvert \leq \lvert S'\rvert \leq  k$.
    \cref{fig:reduce-vertex-copies} shows illustrations of both cases.

    \begin{figure}[t]
        \centering
        \includegraphics{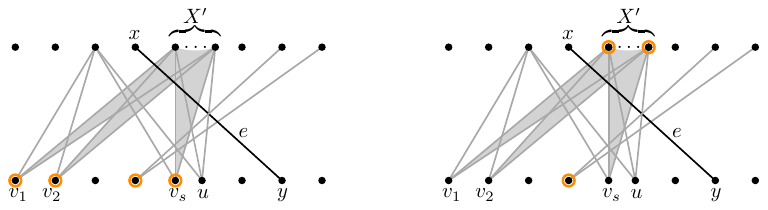}
        \caption{The edge $e$ is crossed by a complete bipartite graph between $\{v_1,\ldots,v_s\}$ and $X'$. All vertices from $X'$ or all vertices from $\{v_1,\ldots,v_s\}$ are in a vertex cover $S'$ (orange) of these edges.}
        \label{fig:reduce-vertex-copies}
    \end{figure}
	
	Now consider an edge $e=xu$ incident to $u$.
	There is a vertex cover $S'$ of size at most $k$ for the edges crossing the edge $xv_s$ in $D'$. 
	Let $X'=\{x'\in N(u)\colon x \prec_X x'\}$ denote the set of neighbors $x'$ of $u$ such that each edge $xv_i$, $i\leq s$, crosses $e$ in $D$.
    Then $\lvert X'\rvert \leq s-1$ (as $x\not \in X'$).
	Similar as above, $S=(S'\cup X')\setminus \{v_1,\ldots,v_{s-1}\}$ is a vertex cover of the edges crossing $e$ in $D$ and $\lvert S\rvert \leq \lvert S'\rvert \leq k$.
\end{proof}

We call a graph \emph{nonredundant} if for each vertex $u$ there are at most $2\lvert N(u)\rvert$ other vertices $v$ with $N(v)=N(u)$.
Lemma~\ref{lem:redundant} states that we can iteratively remove vertices from a graph $G$ until it is nonredundant, so that the resulting graph is $2$-layer $k$-matching-planar if and only if $G$ is.

\begin{restatable}{lemma}{windowsize}\label{lem:windowsize}
	Let $D$ be a $2$-layer $k$-matching-planar drawing of a nonredundant graph $G=(X\cup Y, E)$.
    For each pair of edges $xy$ and $x'y'$, with $x$, $x'\in X$, $x \prec_X x'$, and $y' \preceq_Y y$ (possibly $y=y'$), there are at most $2k + (4k+2) 2^{2k+1}$ vertices between $x$ and $x'$ in $D$.
\end{restatable}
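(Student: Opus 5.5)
Let $Z$ be the set of vertices strictly between $x$ and $x'$ on the $X$-layer. The plan is to split $Z$ into a small part covered by vertex covers and a remainder that consists of few twin classes. Nonredundancy then bounds the size of each class.

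First, the edges $xy$ and $x'y'$ either cross or share the endpoint $y=y'$. Let $S$ and $S'$ be the sets of edges crossing $xy$ and $x'y'$, respectively. By K\H{o}nig's theorem, $S$ and $S'$ admit vertex covers $C$ and $C'$ of size at most $k$ each, so $C\cup C'$ has at most $2k$ vertices. Now take $z\in Z\setminus(C\cup C')$ and any neighbor $w$ of $z$. If $w\prec_Y y'$, then $zw$ crosses $xy$ (as $x\prec_X z$ and $w\prec_Y y' \preceq_Y y$). If $y\prec_Y w$, then $zw$ crosses $x'y'$. Since $z\notin C\cup C'$, we get $w\in C$ in the first case and $w\in C'$ in the second. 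So every neighbor of $z$ lies in $W:=\{w\colon y'\preceq_Y w\preceq_Y y\}\cup C\cup C'$.

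Second, I bound the part of the neighborhood in the $Y$-interval $[y',y]$. Any $w$ strictly between $y'$ and $y$ that has a neighbor $z\in Z$ gives an edge $zw$. If $y\ne y'$, this edge crosses $xy$ or $x'y'$ depending on the side, and at least it forces $w$ or $z$ into a cover. I expect this step to be the main obstacle, because vertices of $Y$ strictly inside $[y',y]$ are not obviously controlled. What I will try first: an edge $zw$ with $y'\prec_Y w\prec_Y y$ does not cross $xy$ when $z$ is right of $x$ and $w$ is left of $y$. So instead I will use $xy$ and $x'y'$ only for neighbors outside the interval. For neighbors inside, I plan to use maximality-free counting: the edges from $Z\setminus(C\cup C')$ into the open interval, together with $xy'$-type edges, form a region in which any two non-adjacent such edges $zw$, $z_2w_2$ with $z\prec_X z_2$ and $w_2\prec_Y w$ cross. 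If $y=y'$, the interval is just $\{y\}$ and the issue disappears. Otherwise I expect to reduce to the case $y=y'$ by picking an extremal pair: replace $(xy,x'y')$ by a pair with $Z$ only larger and $y,y'$ closer, arguing that a crossing pair with a long interval yields such a refinement. If this fails, I will alternatively add $y$ and $y'$ to the candidate neighborhood set and accept the resulting constant in the bound.

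Third, the counting. Once every $z\in Z\setminus(C\cup C')$ has $N(z)\subseteq W_0$ with $|W_0|\le 2k+1$ (namely $C\cup C'$ together with $y$ if $y=y'$), the number of possible neighborhoods is at most $2^{2k+1}$. Vertices with the same neighborhood $N$ are twins, so nonredundancy gives at most $2|N|+1\le 2(2k+1)+1$ vertices per class. This is roughly $(4k+2)$ per class once the empty neighborhood is discarded, since there are no isolated vertices or they are removed as redundant with $s=0$. Summing over classes gives at most $(4k+2)2^{2k+1}$ vertices outside $C\cup C'$. Adding the at most $2k$ cover vertices yields the bound $2k+(4k+2)2^{2k+1}$.
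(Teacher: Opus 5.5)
\textbf{Gap in step two.} Your overall plan matches the paper's: a set $C\cup C'$ of at most $2k$ cover vertices, every other vertex of $Z$ having its neighbourhood inside one fixed set of size at most $2k+1$, and then counting twin classes via nonredundancy. But step two, the one you flag as the main obstacle, is not proved, and the reason you give for it being an obstacle is false.

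You claim that an edge $zw$ with $x\prec_X z$ and $w\prec_Y y$ does not cross $xy$. It does: this is exactly the opposite-order criterion you applied correctly in step one. Your own earlier sentence (``this edge crosses $xy$ or $x'y'$ \dots\ and forces $w$ or $z$ into a cover'') was already the right observation. The real issue is that your case split in step one is weaker than it needs to be. For $z$ strictly between $x$ and $x'$:
\begin{itemize}
\item the edge $zw$ crosses $xy$ whenever $w\prec_Y y$;
\item it crosses $x'y'$ whenever $y'\prec_Y w$.
\end{itemize}
Since $y'\preceq_Y y$, every $w\in Y$ satisfies one of these unless $w=y=y'$. Vertices strictly inside $[y',y]$ in fact give edges crossing both $xy$ and $x'y'$. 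Hence for $z\notin C\cup C'$ you get directly $N(z)\subseteq C\cup C'\cup\{y\}$, and $y$ is needed only when $y=y'$.

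Without this observation your argument stays incomplete. The ``extremal pair'' refinement is never carried out. The fallback of adding only $y$ and $y'$ to the candidate set would not help anyway: if interior vertices really were unconstrained, the number of possible neighbourhoods would not be bounded in terms of $k$.

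\textbf{Counting in step three.} There is also a smaller slip here. If you read nonredundancy as ``at most $2|N(u)|$ other twins'', a class can have $2|N|+1\le 4k+3$ vertices. Discarding the empty neighbourhood does not rescue the product bound, since $(4k+3)(2^{2k+1}-1)>(4k+2)2^{2k+1}$ for $k\ge 1$. There are two ways to fix this:
\begin{itemize}
\item use that exhaustive application of Lemma~\ref{lem:redundant} leaves classes of size at most $2|N|$, which is what the paper does;
\item or count by neighbourhood size: $\sum_{j\ge 1}\binom{2k+1}{j}(2j+1)=(2k+2)2^{2k+1}-1$, which is comfortably within $(4k+2)2^{2k+1}$.
\end{itemize}
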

\begin{proof}
    Let $X'\subseteq X$ denote the set of vertices between $x$ and $x'$ in $D$.
    Since $D$ is $k$-matching-planar, there is a set $S$ of at most $2k+1$ vertices such that every edge crossing $xy$ or $x'y'$ has an endpoint in $S$.
    As $G$ is nonredundant, each vertex in $X'$ is incident to at least one edge.
	Each such edge $ab$, with $a\in X'$, is crossing at least one of $xy$ or $x'y'$ or, in case $y=y'$ may also end at $y$.
	Hence, either $a\in S$ or $N(a)\subseteq S\cup\{y\}$.
	For each subset $S'\subseteq S\cup\{y\}$ there are, by \cref{lem:redundant}, at most $2\lvert S'\rvert \leq 4k+2$ vertices $a\in X'$ with $N(a)=S'$ as $G$ is nonredundant.
	This shows that $\lvert X'\rvert \leq 2k + (4k+2) 2^{2k+1}$.
\end{proof}

\begin{corollary}\label{cor:degreeBound}
	Each vertex of a nonredundant $2$-layer $k$-matching-planar graph $G$ has degree at most $2 + 2k + (4k+2) 2^{2k+1}$.
\end{corollary}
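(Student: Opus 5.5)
The plan is to derive the degree bound directly from \cref{lem:windowsize}. Fix a nonredundant graph $G$ with a $2$-layer $k$-matching-planar drawing $D$, and fix a vertex $y$. By the symmetry of exchanging the two layers we may assume $y\in Y$. Let $x_1\prec_X x_2\prec_X\dots\prec_X x_d$ be the neighbors of $y$, so that $d=\deg(y)$; we want to show $d\leq 2+2k+(4k+2)2^{2k+1}$. We may assume $d\geq 2$, since otherwise the bound is trivial.

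The key step is to apply \cref{lem:windowsize} to the pair of edges $x_1y$ and $x_dy$. Here $x=x_1$, $x'=x_d$, $x\prec_X x'$ and $y'=y\preceq_Y y$, so the hypotheses hold with the permitted case $y=y'$. The lemma gives at most $2k+(4k+2)2^{2k+1}$ vertices of $X$ strictly between $x_1$ and $x_d$. The neighbors $x_2,\dots,x_{d-1}$ all lie strictly between them, so $d-2\leq 2k+(4k+2)2^{2k+1}$. This is exactly the claimed bound.

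The only point needing care is that \cref{lem:windowsize} counts the vertices strictly between $x$ and $x'$, and that the two endpoints themselves contribute the additive $2$. I should also check that the lemma really allows $y=y'$, which it states explicitly. Nonredundancy is used only through the lemma, so no real obstacle remains; the argument is a one-line application.
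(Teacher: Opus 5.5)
Your proof is correct and is essentially the paper's argument: the paper gives no separate proof and treats the corollary as an immediate consequence of \cref{lem:windowsize}, applied with $y=y'$ to the leftmost and rightmost neighbors of a vertex, with the two endpoints giving the additive $2$. Handling vertices of $X$ by exchanging the layers is also fine, since the lemma and the notion of nonredundancy are symmetric in $X$ and $Y$.
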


\begin{figure}
    \centering
    \includegraphics{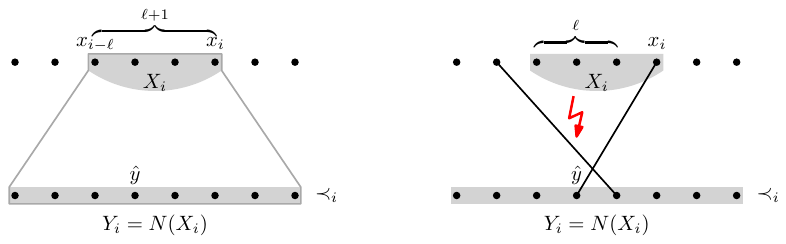}
    \caption{A window consists of $\ell+1$ consecutive vertices from $X$ and their neighbors in $Y$ (left). Edges coming from the left of the window do not cross edges coming from $x_i$ or from even further to the right as the graph is nonredundant (right).}
    \label{fig:algorithm-window}
\end{figure}

Next, we describe the algorithm for a nonredundant bipartite graph $G(X\cup Y, E)$ with a given linear order $\prec_X$ of $X$.
Let $\ell=1 + 2k + (4k+2) 2^{2k+1}$ and let $x_1, \ldots, x_{\lvert X\rvert}$ denote the vertices in $X$ in the given order $\prec_X$.
For each $i$, with $1 \leq i \leq \lvert X\rvert$, let $X_{\leq i}=\{x_1,\ldots,x_i\}$ and $Y_{\leq i}=N(X_{\leq i})$.
Further, for $\ell+1 \leq i \leq \lvert X\rvert$, let $X_i=\{x_{i-\ell},\ldots,x_i\}$ and $Y_i=N(X_i)$.
The sets $X_i$ and $Y_i$ define a \emph{window of size $\ell$}, that is the graph $G[X_i\cup  Y_i]$.
For any $2$-layer $k$-matching-planar drawing the corresponding linear order $\prec_Y$ of $Y$ satisfies the following property due to Lemma~\ref{lem:windowsize} and the choice of $\ell$ (see \cref{fig:algorithm-window}).

\begin{observation}\label{fact:windowseparate}
    There is $\hat{y} \in Y_i$ such that for any $j\leq i-1-\ell$ and $j'\geq i$ we have $N(x_j) \prec_Y \hat{y} \preceq_Y N(x_{j'})$, that is, $\hat{y}$ separates the neighbors coming from the left of the window from the neighbors coming from the right of the window.
\end{observation}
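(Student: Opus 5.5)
The plan is to argue by contraposition from Lemma~\ref{lem:windowsize}. Fix a window $X_i=\{x_{i-\ell},\ldots,x_i\}$. Let $y^\ast$ be the $\prec_Y$-maximum vertex of $\bigcup_{j\le i-1-\ell}N(x_j)$, taking this set to be empty if $i-1-\ell<1$. If the set is empty, choose $\hat y$ to be the $\prec_Y$-minimum of $\bigcup_{j'\ge i}N(x_{j'})$. Note that $x_i$ is in the window and has a neighbor since $G$ is nonredundant (isolated vertices are removed), so this minimum exists, and it lies in $Y_i$ because it is adjacent to $x_i$ or we may simply use a neighbor of $x_i$. The main case is when both sides are nonempty.

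The first step is to show that every neighbor of a left vertex $x_j$ ($j\le i-1-\ell$) lies strictly before every neighbor of a right vertex $x_{j'}$ ($j'\ge i$). Suppose instead that $y\in N(x_j)$ and $y'\in N(x_{j'})$ satisfy $y'\preceq_Y y$. Then $x_jy$ and $x_{j'}y'$ are two edges with $x_j\prec_X x_{j'}$ and $y'\preceq_Y y$. Lemma~\ref{lem:windowsize} then bounds the number of vertices strictly between $x_j$ and $x_{j'}$ by $2k+(4k+2)2^{2k+1}=\ell-1$. However, the vertices strictly between them include $x_{i-\ell},\ldots,x_{i-1}$, which are $\ell$ vertices. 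This is a contradiction.

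Hence $\max_{\prec_Y}\bigcup_{j\le i-1-\ell}N(x_j)\prec_Y \min_{\prec_Y}\bigcup_{j'\ge i}N(x_{j'})$. Take $\hat y$ to be this minimum, i.e.\ the leftmost neighbor of any $x_{j'}$ with $j'\ge i$. The required inequalities $N(x_j)\prec_Y\hat y\preceq_Y N(x_{j'})$ then hold immediately.

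The remaining point, and the only mildly delicate one, is that $\hat y$ must lie in $Y_i=N(X_i)$. If $\hat y$ is adjacent to $x_i$, we are done. Otherwise $\hat y\in N(x_{j'})$ for some $j'>i$. In that case we replace $\hat y$ by the leftmost neighbor $\hat y'$ of $x_i$. This is still strictly right of every left neighbor by the first step (applied with $j'=i$), so the left inequality holds. For the right inequality, we need every neighbor of $x_{j''}$ with $j''\ge i$ to be $\succeq_Y\hat y'$.

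If this fails, we get an edge $x_{j''}y''$ with $y''\prec_Y\hat y'$ and $j''>i$. Together with $x_i\hat y'$, this is a crossing pair with $x_i\prec_X x_{j''}$. This alone does not give a contradiction, so instead we choose $\hat y$ more robustly. Take $\hat y$ to be the $\prec_Y$-smallest vertex of $Y_i$ lying strictly after $y^\ast$. Such a vertex exists, since $x_i$ has a neighbor after $y^\ast$ by the first step.

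It remains to show $\hat y\preceq_Y N(x_{j'})$ for all $j'\ge i$. Suppose some $w\in N(x_{j'})$ with $j'>i$ has $y^\ast\prec_Y w\prec_Y \hat y$. Then $w\notin Y_i$, and this situation has to be excluded. If it cannot be excluded, the statement is to be read with $\hat y\in Y$ rather than $Y_i$: the separation itself is what the algorithm uses, and the element $\hat y=\min_{\prec_Y}\bigcup_{j'\ge i}N(x_{j'})$ works for that. Settling membership in $Y_i$ is the step I expect to be the obstacle.
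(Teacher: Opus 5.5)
Your first step is correct and is exactly the paper's justification: by Lemma~\ref{lem:windowsize} and $\ell-1=2k+(4k+2)2^{2k+1}$, every neighbor of some $x_j$ with $j\le i-1-\ell$ lies strictly left of every neighbor of some $x_{j'}$ with $j'\ge i$. The gap is the requirement $\hat y\in Y_i$. It is part of the statement, and you leave it open. Falling back to $\hat y\in Y$ proves a weaker statement, not this one. Your case without left vertices ($i=\ell+1$) has the same hole. Neither $\min_{\prec_Y}\bigcup_{j'\ge i}N(x_{j'})$ nor an arbitrary neighbor of $x_i$ is shown to lie in $Y_i$ and also precede every neighbor of every $x_{j'}$ with $j'>i$.

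The missing idea is to reuse the \emph{proof} of Lemma~\ref{lem:windowsize}, restricted to the window. Your second candidate is the right one: let $\hat y$ be the leftmost vertex of $Y_i$ strictly right of $y^\ast$, where $y^\ast\in N(x_j)$ for some $j\le i-1-\ell$. Suppose $w\in N(x_{j'})$ with $j'>i$ and $y^\ast\prec_Y w\prec_Y\hat y$. Every neighbor $b$ of a window vertex $x_t$, $i-\ell\le t\le i$, lies in $Y_i$, so it is not strictly between $y^\ast$ and $\hat y$. Hence there are three cases:
\begin{itemize}
\item $b\prec_Y y^\ast$, and $x_tb$ crosses $x_jy^\ast$ because $x_j\prec_X x_t$;
\item $b=y^\ast$;
\item $\hat y\preceq_Y b$, and $x_tb$ crosses $x_{j'}w$ because $x_t\prec_X x_{j'}$ and $w\prec_Y b$.
\end{itemize}
Let $S$ be the union of K\H{o}nig covers, each of size at most $k$, of the edges crossing $x_jy^\ast$ and of the edges crossing $x_{j'}w$. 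Every window vertex lies in $S$ or has its neighborhood inside $S\cup\{y^\ast\}$. The counting from the proof of Lemma~\ref{lem:windowsize} then bounds the number of window vertices by $2k+(4k+2)2^{2k+1}=\ell-1$, but the window has $\ell+1$ vertices, a contradiction.

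You cannot quote Lemma~\ref{lem:windowsize} verbatim here for two reasons. First, $x_jy^\ast$ and $x_{j'}w$ do not cross. Second, the vertices $x_{i+1},\dots,x_{j'-1}$ may have neighbors between $y^\ast$ and $w$. Counting only the $\ell+1$ window vertices, whose neighbors all lie in $Y_i$, avoids both issues.

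For $i=\ell+1$, take $\hat y$ to be the leftmost vertex of $Y_i$. A neighbor $w\prec_Y\hat y$ of some $x_{j'}$ with $j'>i$ would make $x_{j'}w$ crossed by every edge at the $\ell+1$ window vertices. The same count, now with $|S|\le k$, gives a contradiction.
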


We will check recursively whether any $2$-layer $k$-matching-planar drawing of a window extends to the graph $G[X_{\leq i}, Y_{\leq i}]$ to its left using the following \emph{merge operation} for linear orders:
Given a linear order $\tau$ of some set $X$ and a linear order $\sigma$ of some set $Y$ that agree on $X \cap Y$, let $\tau \triangleleft \sigma$ denote the unique linear order of $X\cup Y$ that extends both $\tau$ and $\sigma$ and places vertices from $X$ to the left of vertices in $Y$ whenever there is a choice.
An example is given in \cref{fig:algorithm-merge}.

Consider any integer $i$, with $\ell+1 \leq i \leq \lvert X\rvert$, any linear order $\prec_i$ of $Y_i$, and any $\hat{y}\in Y_i$.
Here $\prec_i$ specifies a drawing of the window $G[X_i\cup  Y_i]$ and $\hat{y}$ stands for the position in the window which separates insertions of further vertices from the left and from the right of the window (described in \cref{fact:windowseparate}).
We define the boolean predicate

\begin{description}
    \item[$\phi(i, \prec_i, \hat{y}) = \text{true}$] if and only if
    
    there is a $2$-layer $k$-matching-planar drawing of $G[X_{\leq i}\cup Y_{\leq i}]$ such that the induced linear order of $Y_i$ agrees with $\prec_i$ and each vertex from $Y_{\leq i-\ell-1}$ is placed to the left of $\hat{y}$.
\end{description}

\begin{figure}[htbp]
    \centering
    \includegraphics{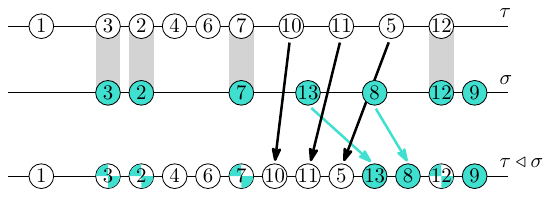}
    \caption{In the merged linear order $\tau\triangleleft\sigma$, vertices from $\tau$ are placed as far left as possible.}
    \label{fig:algorithm-merge}
\end{figure}

Then $G$ has a $2$-layer $k$-matching-planar drawing if and only if $\phi(\lvert X\rvert, \prec_{\lvert X\rvert}, \hat{y})$ is true for some order $\prec_{\lvert X\rvert}$ and some $\hat{y}$.
In the base case $i=\ell+1$, we directly check whether the $2$-layer drawing of $G[X_{\leq i}\cup Y_{\leq i}] = G[X_i,Y_i]$ given by $\prec_i$ is $k$-matching-planar.
For $i>\ell+1$ we will determine $\phi$ recursively by dynamic programming.
We say that a linear order $\prec_{i-1}$ of $Y_{i-1}$ is \emph{compatible} with $\prec_i$, if the orders agree on $Y_i \cap Y_{i-1}$ and, in the merged order $\prec_{i-1} \triangleleft \prec_i$, all neighbors of $x_{i-\ell-1}$ (the only vertex in $X_{i-1}\setminus X_i$) are to the left of $\hat{y}$.
The following lemma shows how $\phi(i,\prec_i,\hat{y})$ is determined recursively.

\begin{restatable}{lemma}{lemrecursion}\label{lem:recursion}
   Let $i > \ell+1$, let $\prec_i$ be a linear order of $Y_i$ such that the corresponding $2$-layer drawing of $G[X_i, Y_i]$ is $k$-matching-planar, and let $\hat{y}\in Y_i$ such that $\hat{y} \preceq_i N(x_i)$.
   Then $\phi(i,\prec_i, \hat{y})$ is true if and only if $\phi(i-1,\prec_{i-1}, \hat{y}')$ is true for some linear order $\prec_{i-1}$ of $Y_{i-1}$ that is compatible with $\prec_i$ and $\hat{y}'$ is the leftmost vertex from $N(x_{i-1}) \cup \{u\in Y_{i-1}\colon \hat{y}  \triangleleft_i u\}$, where $\triangleleft_i$ is the order $\prec_{i-1} \triangleleft \prec_i$.
\end{restatable}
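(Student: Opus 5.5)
We prove the two directions separately, following the template of the corresponding lemma in~\cite{KOW25}. Throughout, a drawing of $G[X_{\leq i}\cup Y_{\leq i}]$ is identified with a linear order of $Y_{\leq i}$, and we use that $X_{\leq i}=X_{\leq i-1}\cup\{x_i\}$ and $Y_{\leq i}=Y_{\leq i-1}\cup N(x_i)$.

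\emph{($\Rightarrow$)} Suppose $\phi(i,\prec_i,\hat y)$ holds, witnessed by a $k$-matching-planar drawing with order $\prec$ on $Y_{\leq i}$. Set $\prec_{i-1}$ to be the restriction of $\prec$ to $Y_{i-1}$. Restricting the drawing to $G[X_{\leq i-1}\cup Y_{\leq i-1}]$ keeps it $k$-matching-planar, since deleting edges never increases the matching number of the set of edges crossing a given edge. Next we check compatibility. The orders $\prec_{i-1}$ and $\prec_i$ agree on $Y_{i-1}\cap Y_i$ because both are restrictions of $\prec$. Moreover, $N(x_{i-\ell-1})\subseteq Y_{\leq i-\ell-1}$ lies left of $\hat y$ in $\prec$.

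We then must show that this also holds in the merged order $\prec_{i-1}\triangleleft\prec_i$. The point is that the merge places $Y_{i-1}$-vertices as far left as possible. Hence any vertex that lies left of $\hat y$ in some common extension (namely $\prec$) also lies left of $\hat y$ in the merge. Precisely, $a\triangleleft_i\hat y$ fails only if $\hat y$ is forced before $a$ by a chain in $\prec_{i-1}\cup\prec_i$, and every such chain is respected by $\prec$. Finally, every vertex of $Y_{\leq i-\ell-2}$ lies left of $\hat y$ in $\prec$. It remains to show it lies left of $\hat y'$. For a vertex of $N(x_{i-1})$, we argue via \cref{fact:windowseparate}/\cref{lem:windowsize} applied to the pair $x_{j}$ (with $j\le i-\ell-2$) and $x_{i-1}$: there are too many vertices between them for such a crossing pair to occur. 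For a vertex $u$ with $\hat y\triangleleft_i u$, the preceding merge argument gives $\hat y\preceq u$ in $\prec$, and hence $Y_{\leq i-\ell-2}\prec u$. So $\phi(i-1,\prec_{i-1},\hat y')$ holds.

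\emph{($\Leftarrow$)} Let $\prec'$ witness $\phi(i-1,\prec_{i-1},\hat y')$ on $Y_{\leq i-1}$. We build an order $\prec$ on $Y_{\leq i}$ by merging $\prec'$ with $\prec_i$ in the style of $\triangleleft$: vertices of $Y_{\leq i-1}\setminus Y_i$ are placed as far left as possible, consistent with $\prec_{i-1}$ and $\prec_i$. Because all of $Y_{\leq i-\ell-2}$ lies left of $\hat y'$ and $\hat y'$ is at or before every vertex after $\hat y$, this is a consistent linear order. In it, every vertex of $Y_{\leq i-\ell-1}$ precedes $\hat y$; for $N(x_{i-\ell-1})$ this holds by compatibility. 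The new vertices $N(x_i)\setminus Y_{\leq i-1}$ all sit at or right of $\hat y$.

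\textbf{Main obstacle.} The hardest step is proving that this drawing is $k$-matching-planar. Consider an edge $e$ and the set $S_e$ of edges crossing it. If both endpoints of $e$ and all edges in $S_e$ lie within a common window $G[X_i\cup Y_i]$ or $G[X_{\le i-1}\cup Y_{\le i-1}]$, the bound follows from the hypotheses. The difficulty is mixed crossings, where an edge $x_jy$ with $j\le i-\ell-1$ would cross an edge at $x_i$. The separator $\hat y$ excludes this: edges from $X_{\leq i-\ell-1}$ end left of $\hat y$, while edges from $x_i$ end at or right of $\hat y$, since $\hat y\preceq_i N(x_i)$. Consequently every crossing pair involving $x_i$ lies within the window $X_i$, whose drawing is $k$-matching-planar by assumption. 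Every crossing pair not involving $x_i$ lives in the old drawing. For edges $e$ in the overlap, the crossing set $S_e$ is the union of its old part and its window part. The remaining task is to show that the separator implies one part is contained in the other, or in the window entirely, so that no matching exceeds $k$. I would try to establish this containment first, using \cref{lem:windowsize} once more to show that any edge crossed both by an edge from $X_{\leq i-\ell-1}$ and by an edge at $x_i$ would force more than $\ell-1$ vertices between two crossing-related endpoints.
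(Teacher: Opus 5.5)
Your $(\Rightarrow)$ direction is correct and follows the paper's argument. You even make explicit why $\hat y\triangleleft_i u$ forces $\hat y\preceq u$ in the witnessing order, which the paper uses tacitly.

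The gap is in $(\Leftarrow)$, at the step you flag yourself. You never show that the merged order gives a $k$-matching-planar drawing, and the tool you propose cannot be used there. Lemma~\ref{lem:windowsize} needs a $k$-matching-planar drawing that contains both an edge from $X_{\le i-\ell-1}$ and an edge at $x_i$. $D'$ does not contain $x_i$, and the window drawing $D_i$ of $G[X_i\cup Y_i]$ given by $\prec_i$ does not contain $X_{\le i-\ell-1}$. The only candidate is the merged drawing $D$, whose $k$-matching-planarity is exactly what you are trying to prove, so that route is circular.

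Your own remark that no edge from $X_{\le i-\ell-1}$ can cross an edge at $x_i$ already gives a clean dichotomy. A crossing set therefore never splits into an old part and a window part. Let $e=xy$ be an edge of $D$.
\begin{itemize}
\item If $x\neq x_i$ and no edge at $x_i$ crosses $e$, then $e$ and all edges crossing it lie in $G[X_{\le i-1}\cup Y_{\le i-1}]$, and $D'$ bounds the matching.
\item Otherwise, either $x=x_i$, so $\hat y\preceq y$ by hypothesis, or $e$ is crossed by some $x_iy''$. Since $x_i$ is rightmost, the latter means $\hat y\preceq y''\prec y$. In both cases all of $Y_{\le i-\ell-1}$ precedes $\hat y\preceq y$, so $x\notin X_{\le i-\ell-1}$, that is, $x\in X_i$. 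An edge $x'y'$ with $x'\in X_{\le i-\ell-1}$ crossing $e$ would need $y\prec y'$, while $y'\prec\hat y\preceq y$. Hence $e$ and every edge crossing it lie in $D_i$, which is $k$-matching-planar by hypothesis.
\end{itemize}
This is the paper's argument.

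It rests entirely on two facts you only assert.
\begin{itemize}
\item The merge $\prec'\triangleleft\prec_i$ must be well defined, i.e.\ $\prec'$ and $\prec_i$ must agree on $Y_{\le i-1}\cap Y_i$. Your stated reason (``$Y_{\le i-\ell-2}$ lies left of $\hat y'$ \dots'') does not address this. The paper reduces it to $Y_{\le i-1}\cap Y_i=Y_{i-1}\cap Y_i$ (via Observation~\ref{fact:windowseparate}) plus compatibility.
\item Every vertex of $Y_{\le i-\ell-2}$ must precede $\hat y$ in the merged order. The paper proves this by a case distinction on whether $\hat y'\preceq\hat y$, using that the merge places vertices of $Y_{\le i-1}$ as far left as possible.
\end{itemize}
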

\begin{proof}
	First assume that $\phi(i,\prec_i,\hat{y})$ is true.
    Consider the corresponding $2$-layer $k$-matching-planar drawing $D$ of $G[X_{\leq i}\cup Y_{\leq i}]$ such that the induced linear order of $Y_i$ agrees with $\prec_i$ and each vertex from $Y_{\leq i-\ell-1}$ is placed to the left of $\hat{y}$.
    Let $\prec_{i-1}$ be the linear order of $Y_{i-1}$ obtained from $D$ and let $\hat{y}'$ be defined as in the lemma.
    Clearly $\prec_{i-1}$ and $\prec_{i}$ are compatible.
    We will show that $\phi(i-1, \prec_{i-1}, \hat{y}')$ is true.
    Clearly, the drawing $D$ can be restricted to a $2$-layer $k$-matching-planar drawing of $G[X_{\leq i-1}\cup Y_{\leq i-1}]$ such that the induced linear order of $Y_{i-1}$ agrees with $\prec_{i-1}$.
    Each vertex from $Y_{\leq i-\ell-2}$ is to the left of $\hat{y}$ by assumption and to the left of all vertices from $N(x_{i-1})$ by Lemma \ref{lem:windowsize}.
    Hence, each vertex from $Y_{\leq i-\ell-2}$ is placed to the left of $\hat{y}'$, because $\hat{y}'$ is the leftmost vertex in $N(x_{i-1})$ or we have $\hat{y} \triangleleft_i \hat{y}'$.
    This shows that $\phi(i-1, \prec_{i-1}, \hat{y}')$ is true.
    
    Now assume that $\phi(i-1,\prec_{i-1}, \hat{y}')$ is true for some order $\prec_{i-1}$ of $Y_{i-1}$ that is compatible with $\prec_i$ and $\hat{y}'$ as defined in the lemma.
    So there is a $2$-layer $k$-matching-planar drawing $D'$ of $G[X_{\leq i-1}, Y_{\leq i-1}]$ such that the induced linear order of $Y_{i-1}$ agrees with $\prec_{i-1}$ and each vertex from $Y_{\leq i-\ell-2}$ is placed to the left of $\hat{y}'$.
	Let $\prec'$ be the order of $Y_{\leq i-1}$ in $D'$ (which agrees with $\prec_{i-1}$ on $Y_{i-1}$).
    We first show that $\prec'$ and $\prec_i$ agree on their shared vertices, which allows to merge these orders.
    Then we show that the merged order $\prec' \triangleleft \prec_i$ yields a $2$-layer $k$-matching-planar drawing of $G[X_{\leq i}, Y_{\leq i}]$ and that each vertex from $Y_{\leq i}\setminus Y_i$ is placed to the left of $\hat{y}$ in the merged order. 
    This proves that $\phi(i,\prec_i, \hat{y})$ is true.
    
    The vertices that appear in both orders are given by $Y_i\cap Y_{\leq i-1} = Y_{i}\cap Y_{i-1}$, where the equality holds by means of \cref{fact:windowseparate}.
    Since $\prec_{i-1}$ and $\prec_i$ are compatible, they agree on $Y_i\cap Y_{i-1}$, and hence  $\prec'$ and $\prec_i$ agree on their shared vertices.
    This allows to consider the merged order $\prec' \triangleleft \prec_i$ which we denote by $\prec$.
    Note that $\prec$ and $\prec_{i-1}\triangleleft \prec_i$ agree on $Y_i \cup Y_{i-1}$.
    We next consider the separation point $\hat{y}$.
    Observe that $Y_{\leq i-\ell-1} = Y_{\leq i-\ell-2} \cup N(x_{i-\ell-1})$.
    The vertices in  $N(x_{i-\ell-1})$ are placed to the left of $\hat{y}$ in $\prec$, since $\prec_{i-1}$ is compatible with $\prec_i$.
    The vertices in $Y_{\leq i-\ell-2}$ are placed to the left of $\hat{y}'$ in $\prec_{i-1}$ and hence in $\prec$.
    So if $\hat{y}' \preceq \hat{y}$, then they are clearly left of $\hat{y}$ as well.
    If $\hat{y} \prec \hat{y}'$, then $\hat{y}'$ is the leftmost vertex from $\{u\in Y_{i-1}\colon \hat{y}  \prec u\}$ (and $\hat{y}\not\in Y_{\leq i-1}$).
    Consider some $y\in Y_{\leq i-\ell-2}$.
    If there is $u\in Y_{i-1}$ such that  $y \prec' u \prec' \hat{y}'$ then  $y \prec u \prec \hat{y} \prec \hat{y}'$.
    Otherwise, there is a choice on the order of $y$ and $\hat{y}$ when merging $\prec'$ and $\prec_i$.
    In this case the merge operation places $y$ to the left of $\hat{y}$ by definition.
    Hence, in each case $\hat{y}$ is the desired separation point and all vertices from $Y_{\leq i-\ell-1}$ are placed left of $\hat{y}$ in $\prec$.

    It remains to show that $\prec$ yields a $k$-matching-planar drawing $D$ of $G[X_{\leq i}, Y_{\leq i}]$.
    Restricting $D$ to $G[X_{\leq i-1}, Y_{\leq i-1}]$ yields the $k$-matching-planar drawing $D'$.
    Restricting $D$ to $G[X_{i}, Y_{i}]$ yields a $k$-matching-planar drawing $D_i$ by assumption on $\prec_i$.
    So it suffices to prove that for each edge $e$ in $D$, the edge $e$ and the edges crossing $e$ are either all in $D'$ or all in $D_i$.
    Consider an edge $xy$, with $x\in X$, that is not in $D'$ itself or crossed by some edge not in $D'$.
    Then either $x = x_i$ or $xy$ is crossed by some edge incident to $x_i$.
    In both cases, $\hat{y} \preceq y$ since $\hat{y} \prec_i N(x_i)$ by assumption.
    We showed above that each vertex from $Y_{\leq i-\ell-1}$ is placed left of $\hat{y}$ in $\prec$.
    This implies that $x\in X_i$ and, hence, $xy$ and all edges crossing $xy$ are in $D_i$.
    This shows that $\prec$ yields a $k$-matching-planar drawing $D$ of $G[X_{\leq i}, Y_{\leq i}]$.
    Altogether, $\phi(i,\prec_i, \hat{y})$ is true.
\end{proof}

\begin{theorem}
	For any $n$-vertex bipartite Graph $G$, \textsc{OneSided-$k$-MatchingPlanarity} can be decided in time $O(n^3 + \lvert X\rvert 2^{2^{12k}})$.
\end{theorem}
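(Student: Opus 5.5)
The algorithm has two phases: a preprocessing phase that makes the graph nonredundant, followed by the dynamic program over windows given by $\phi$ and Lemma~\ref{lem:recursion}. The running time is the sum of the two costs.

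\textbf{Preprocessing.} We group the vertices of $G$ by their neighbourhoods. Since $G$ has $O(n^2)$ edges, we can sort the adjacency lists and then sort the vertices lexicographically by neighbourhood, or use hashing; either takes $O(n^3)$ time at most. In each class of twins with common neighbourhood of size $s$, we keep $2s+1$ vertices and delete the rest. By Lemma~\ref{lem:redundant}, applied repeatedly, the answer does not change. Deleting vertices of $X$ does not affect the relative order of the remaining vertices, so we simply restrict $\prec_X$ to what is left. We also delete isolated vertices of $Y$, and isolated vertices of $X$ are handled by the twin rule with $s=0$. Next we check the degree bound of Corollary~\ref{cor:degreeBound}. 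If any vertex violates it, we answer ``no''. Otherwise every vertex has degree at most $d=2+2k+(4k+2)2^{2k+1}=2^{O(k)}$. Also, if $\lvert X\rvert\le \ell+1$, the graph is a single window; we pad with the base case, or brute-force over orders of $Y$, where $\lvert Y\rvert\le (\ell+1)d$.

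\textbf{Dynamic programming.} Each window $X_i$ has $\ell+1$ vertices, so $\lvert Y_i\rvert\le(\ell+1)d=:m=2^{O(k)}$. The table entries are indexed by $i$, a linear order $\prec_i$ of $Y_i$, and a separator $\hat y\in Y_i$. For fixed $i$ there are at most $m!\cdot m\le m^{m+1}=2^{O(k)2^{O(k)}}$ entries. For each entry we do the following.
\begin{enumerate}
\item We test whether the drawing of $G[X_i\cup Y_i]$ given by $\prec_i$ is $k$-matching-planar, and whether $\hat y\preceq_i N(x_i)$. Both checks run in time polynomial in $m$; the first uses bipartite maximum matching per edge.
\item We enumerate all orders $\prec_{i-1}$ of $Y_{i-1}$ that are compatible with $\prec_i$, compute $\hat y'$ as in Lemma~\ref{lem:recursion}, and look up $\phi(i-1,\prec_{i-1},\hat y')$.
\end{enumerate}
This gives at most $(m!)^2\, m\,\mathrm{poly}(m)$ work per index $i$. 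The base case $i=\ell+1$ is handled directly.

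The remaining step is the bound $(m!)^2\,\mathrm{poly}(m)\le 2^{2^{12k}}$. We have $\ell\le 2^{2k+1}(4k+3)$ and $d\le \ell+1$, so $m\le(\ell+1)^2\le 2^{4k+2}(4k+4)^2$. Hence $\log_2((m!)^2)\le 2m\log_2 m$, which is at most about $2^{4k+O(\log k)}\cdot O(k)$. This is far below $2^{12k}$ for $k\ge1$. For $k=0$ the constants also work out, since the expressions are constant; if needed we absorb this into the $O$. Summing over the $\lvert X\rvert$ indices gives $O(\lvert X\rvert\,2^{2^{12k}})$, plus $O(n^3)$ for preprocessing. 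The final answer is the disjunction of $\phi(\lvert X\rvert,\cdot,\cdot)$ over all orders and separators, and its correctness follows from Lemma~\ref{lem:recursion} and \Cref{fact:windowseparate}.

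I expect the main obstacle to be the bookkeeping in this analysis. The windows $Y_i$ must be shown to have size $2^{O(k)}$, which needs the degree bound of Corollary~\ref{cor:degreeBound} and hence the preceding nonredundancy reduction. In addition, the enumeration of compatible $\prec_{i-1}$ has to be done so that the cost per $i$ stays within $2^{2^{12k}}$. I would handle the latter by generous overestimation with $m^{2m+2}$.
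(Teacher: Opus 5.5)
Your proposal follows the paper's proof. You reduce to a nonredundant graph and use Corollary~\ref{cor:degreeBound} to get windows with $|Y_i|\le m=2^{O(k)}$. You then run the dynamic program of Lemma~\ref{lem:recursion} and bound the work per index by roughly $(m!)^2\,\mathrm{poly}(m)$. Your explicit degree test and your brute-force treatment of $|X|\le \ell+1$ are correct additions that the paper leaves implicit.

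\textbf{Gap: one round of twin deletion.} You do not justify that a single round of twin deletion already yields a nonredundant graph. Deleting twins on one side lowers $|N(x)|$ for vertices $x$ on the other side, and so lowers the threshold $2|N(x)|+1$ of their classes. A priori, a class that was fine could therefore become redundant. Both your ``answer no'' test and your bound on $|Y_i|$ rest on Corollary~\ref{cor:degreeBound}, which needs nonredundancy.

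The paper avoids this by repeating the reduction until nothing changes. This takes at most $n$ rounds of $O(n^2)$ each, which is where the $O(n^3)$ term comes from.

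Alternatively, you can prove that one round suffices. If two adjacent twin classes $C$, $D$ were both shrunk, then $|C|>2|N(C)|+1\ge 2|D|+1>4|N(D)|+3\ge 4|C|+3$, which is impossible. Hence no neighbour class of a shrunk class is shrunk, and three facts follow:
\begin{itemize}
\item A shrunk class $C$ keeps its neighbourhood and ends with exactly $2|N(C)|+1$ vertices. This also makes each individual deletion in the round legitimate by Lemma~\ref{lem:redundant}.
\item An unshrunk class $C$ adjacent to a shrunk class $D$ ends with $|N(C)|\ge 2|N(D)|+1\ge 2|C|+1$.
\item Every class keeps at least one vertex, so no two classes merge.
\end{itemize}
Thus the result of one round is nonredundant.

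\textbf{Minor slip.} The inequality $2m\log_2 m<2^{12k}$ is false for $k=1$: there $\ell=51$ and $m$ can be $52^2=2704$. It holds only from $k=2$ on, so $k=1$ must be absorbed into the constant just like $k=0$.
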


\begin{proof}
	The algorithm first removes vertices from $G$ until it is nonredundant.
	To this end, the neighborhood of each vertex is computed and compared with the other vertices' neighborhoods.
	For each detected neighborhood, only $2k$ vertices with that neighborhood are kept.
	Removing vertices changes some neighborhoods, so the process is repeated until the graph is non-redundant.
    Since at least one vertex is removed in each repetition, at most $n$  repetitions are needed, each requiring at most $O(n^2)$ time.

    We use the following rough upper bounds to simplify calculations: $\ell\leq 2^{3k+3}$ and each vertex has degree at most $2^{3k+3}$ by \cref{cor:degreeBound}.
    
    The dynamic programming computes the values of all predicates $\phi(i,\prec_i, \hat{y})$ for each $i=\ell+1,\ldots,\lvert X\rvert$, each linear order $\prec_i$ of $Y_i$, and each $\hat{y}\in Y_i$.
    By \cref{cor:degreeBound} we have $\lvert X_i \cup Y_i \rvert \leq 1 + \ell + \ell (2 + 2k + (4k+2) 2^{2k+1}) \leq 2^{6k+6}$.
    So the number of predicates to be computed is in $O(\lvert X\rvert\ 2^{6k+6}!)$.
    For each predicate we have to first check whether $\prec_i$ yields a $k$-matching-planar drawing of $G[X_i, Y_i]$.
    This can be achieved in $O(\lvert X_i\cup Y_i\rvert^2) = O(2^{12k+12})$ time.
    If $i=\ell+1$, then $\phi(i,\prec_i, \hat{y})$ is true if and only if this check returned true.
    If $i>\ell+1$, then we determine $\phi(i,\prec_i, \hat{y})$ using Lemma~\ref{lem:recursion}.
    To consider all orders $\prec_{i-1}$ of $Y_{i-1}$ that are compatible with $\prec_i$, it suffices to consider all $O(2^{(6k+6)2^{6k+6}})$ possibilities to insert the vertices from $Y_{i-1}\setminus Y_i \subseteq N(x_{i-\ell-1})$ into the order $\prec_i$.
    Checking for compatibility with $\prec_i$ can be done in $O(\lvert Y_i\rvert) = O(2^{6k+6})$ time each.
    For each compatible order, the separation point $\hat{y}'$ is computed according to the definition in Lemma~\ref{lem:recursion} and the previously computed predicate $\phi(i-1,\prec_{i-1}, \hat{y}')$ is accessed in $O(\lvert Y_i\rvert)$ time.
    Altogether, the dynamic programming needs $O(\lvert X\rvert\ 2^{6k+6}! (2^{12k+12} + 2^{(6k+6)2^{6k+6}} 2^{6k+6})) = O(\lvert X\rvert 2^{2^{12k}})$ time.
\end{proof}

Note that we make no attempts to optimize the running time in terms of $k$.

\section{Inapproximability of Two-Sided $k$-Matching-Planarity}\label{sec:inapprox-twosided}

In this section we consider the optimization variant of \textsc{TwoSided-$k$-MatchingPlanarity}, asking for the smallest $k$ such that a given bipartite graph admits a $2$-layer $k$-matching-planar drawing. We show that this number cannot be approximated within any constant factor in polynomial time unless $\operatorname{P}=\operatorname{NP}$, even on trees and therefore also NP-hardness.

\begin{center}
\begin{tcolorbox}[width=13cm]
\begin{center}
 \begin{tabular}{rl}
\textbf{Problem:} & \textsc{TwoSided-$k$-MatchingPlanarity}\\
 \textbf {Given:} & A bipartite graph $G=(X\cup Y, E)$\\
 \textbf {Question:} & What is the smallest integer $k$ such that there are linear orders\\ & $\prec_X$ and $\prec_Y$ on $X$ and $Y$ respectively, so that the $2$-layer drawing\\
 & of $G$ induced by $\prec_X$ and $\prec_Y$ is $k$-matching-planar? \end{tabular}
\end{center}
\end{tcolorbox}
\end{center}

\begin{restatable}{theorem}{thminapprox}\label{thm:inapprox}
    For every constant $c\geq 1$, there is no polynomial-time $c$-approximation algorithm for the smallest $k$ such that a given bipartite graph admits a $2$-layer $k$-matching-planar drawing unless $\operatorname{P}=\operatorname{NP}$, even when the input graph is restricted to trees.
\end{restatable}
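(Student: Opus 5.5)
We follow the strategy sketched in the technical overview: a gap-preserving reduction from \textsc{Bandwidth} on trees, which cannot be approximated within any constant factor in polynomial time unless $\operatorname{P}=\operatorname{NP}$~\cite{DFU11}. Given a tree $T$, we build $G_T$ by $1$-subdividing every edge of $T$ and attaching one pendant vertex to each original vertex. Then $G_T$ is a tree, hence bipartite. Its original vertices form one class, say $X$. The other class $Y$ consists of the subdivision vertices and the pendants. Let $k^*(G_T)$ be the smallest $k$ for which $G_T$ is $2$-layer $k$-matching-planar. The goal is
\[
\tfrac{1}{2}\operatorname{bw}(T) - c_1 \;\le\; k^*(G_T) \;\le\; 3\operatorname{bw}(T) + c_2
\]
for absolute constants $c_1,c_2$. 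With this sandwich, a $c$-approximation for $k^*$ yields an $O(c)$-approximation for bandwidth on trees, after padding so that the additive constants are negligible. Such padding is possible since bandwidth on trees remains hard to approximate for instances of large bandwidth.

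\textbf{Upper bound.} Take an optimal bandwidth layout $\pi$ of $T$ and place $X$ on the top layer in the order $\pi$. On the bottom layer, place each pendant directly below its original vertex. Place the subdivision vertex of $uv$ at the position $\min(\pi(u),\pi(v))+1/2$, breaking ties consistently. Now fix an edge $e$ of $G_T$; both endpoints of $e$ lie within horizontal distance $\operatorname{bw}(T)$ of some position $p$. Any edge crossing $e$ must have an endpoint, on the top or bottom layer, in the horizontal range $[p-\operatorname{bw}(T),p+\operatorname{bw}(T)]$, since long edges of $G_T$ span at most $\operatorname{bw}(T)$ positions. That range contains $O(\operatorname{bw}(T))$ vertices on each layer. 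So the crossing edges have a vertex cover of size about $3\operatorname{bw}(T)+O(1)$, and a careful count should give exactly this constant. The count must take into account that each top position carries at most one pendant and several subdivision vertices. This is the easy half.

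\textbf{Lower bound (main obstacle).} Take a $k$-matching-planar drawing of $G_T$ with $k=k^*$ and let $\pi$ be the order of $X$ on its layer. We want to show that $\pi$, as a layout of $V(T)=X$, has bandwidth at most $2k+O(1)$. Let $uv\in E(T)$ with subdivision vertex $w$, and suppose $u\prec_X v$ with many vertices $x_1,\dots,x_m$ of $X$ strictly between them. Each $x_j$ has a pendant $y_j$. The path $u\,w\,v$ forms a ``cup'' with the two edges $uw$ and $wv$, and every pendant edge $x_jy_j$ crosses $uw$ or $wv$, according to whether $y_j$ lies right or left of $w$. The pendant edges are pairwise disjoint, so they form a matching. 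Hence at least $m/2$ of them cross one of the two edges, giving $m/2\le k$, i.e.\ $m\le 2k$, and thus $|\pi(u)-\pi(v)|\le 2k+1$. This already gives $\operatorname{bw}(T)\le 2k+1$, which is in fact stronger than the factor $1/2$ announced in the overview. The step needing care is the geometry: we must check that a pendant edge with top endpoint strictly between $u$ and $v$ necessarily crosses $uw$ or $wv$. In a $2$-layer drawing this holds whenever $y_j\neq w$, which is true because pendants and subdivision vertices are distinct vertices. If any subtlety arises in this check, we fall back to the weaker factor claimed in the overview.

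Finally, we combine the two bounds: from a $c$-approximate value for $k^*(G_T)$ we obtain an estimate of $\operatorname{bw}(T)$ within a factor of about $6c$, which contradicts~\cite{DFU11} unless $\operatorname{P}=\operatorname{NP}$.
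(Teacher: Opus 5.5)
Your proposal is correct and follows the paper's proof. It uses the same gadget $G_T$ and the same pigeonhole argument on the pendant edges under the path $u\,w_{uv}\,v$, giving $\operatorname{bw}(T)\le 2k+1$. It also gets a linear upper bound from an optimal bandwidth layout; the paper places $w_{uv}$ midway among the intermediate pendants and counts crossings type by type, whereas your placement next to the left endpoint's pendant also works.

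The only step to tighten is your count of bottom-layer vertices in the window, since a single position may carry up to $\operatorname{bw}(T)$ subdivision vertices. You can fix this in either of two ways:
\begin{itemize}
\item invoke that $T$ is a forest, so an interval of $m$ positions spans fewer than $m$ tree edges (as the paper does); or
\item note that every crossing edge has its top endpoint within distance less than $2\operatorname{bw}(T)$ of the top endpoint of $e$, so fewer than $4\operatorname{bw}(T)$ top vertices already form a vertex cover.
\end{itemize}
Finally, no padding is needed, since $\operatorname{bw}(T)\ge 1$ absorbs the additive constants.
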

\begin{proof}
To show the inapproximability we reduce from the \textsc{Bandwidth} problem on trees similar to~\cite[Theorem~16]{KOW25}. The \emph{bandwidth} of a graph $H$, denoted $\operatorname{bw}(H)$, is the minimum integer $b$ such that there is a linear order $\sigma\colon V(H)\to\{1,\dots,|V(H)|\}$ with $|\sigma(u)-\sigma(v)|\leq b$ for every edge $uv\in E(H)$.
This problem is NP-hard to approximate within any constant factor on trees~\cite{DFU11}.

\begin{center}
\begin{tcolorbox}[width=13cm]
\begin{center}
 \begin{tabular}{rl}
\textbf{Problem:} & \textsc{Bandwidth}\\
 \textbf {Given:} & A tree $T$\\
 \textbf {Question:} & What is the smallest integer $b$ such that there is a linear order $\sigma$\\
 & of $V(T)$ with $|\sigma(u)-\sigma(v)|\leq b$ for every edge $uv\in E(T)$?
 \end{tabular}
\end{center}
\end{tcolorbox}
\end{center}

\paragraph*{The construction}
Given a tree $T$ with $n$ vertices, we construct a bipartite graph $G_T=(X_T\cup Y_T, E_T)$ as follows. We first subdivide each edge $uv\in E(T)$ once by introducing a new vertex $w_{uv}$, so the edge $uv$ is replaced by the path $u,w_{uv},v$. Then, for each vertex $v\in V(T)$, we attach a new pendant $y_v$ adjacent to $v$. The bipartition is given by
\[
X_T = V(T),\qquad Y_T = \{w_{uv} : uv\in E(T)\}\cup\{y_v : v\in V(T)\}.
\]
Every edge of $G_T$ has one endpoint in $X_T$ and one in $Y_T$, so $G_T$ is bipartite, and the construction can clearly be carried out in polynomial time. Notice that $G_T$ itself is a tree, namely the $1$-subdivision of $T$ with one extra pendant per original vertex. Note that our construction is similar to the one described in~\cite[Theorem 11]{KOW25} where we attach one pendent instead of $\ell$ per original vertex.

\paragraph*{Soundness and Completeness}
 
We prove two complementary bounds connecting the bandwidth of $T$ and the smallest matching-planarity parameter of $G_T$. Combined with the constant-factor inapproximability of bandwidth, they yield Theorem \ref{thm:inapprox}. Following similar arguments as in~\cite[Theorem 11]{KOW25} we prove the following lemmas.

\begin{lemma}\label{lem:bw-to-kmatching}
    If $\operatorname{bw}(T) \le b$, then $G_T$ admits a $(3b-3)$-matching-planar~drawing.
\end{lemma}

\begin{proof}
Suppose that $T$ has bandwidth at most $b$ for some integer $b\geq 1$, and let $\sigma=(v_1,\dots,v_n)$ be a corresponding linear order. We claim that $G_T$ admits a $2$-layer $(3b-3)$-matching-planar drawing. Define $\prec_X$ on $X_T$ to agree with $\sigma$, that is, $v_1\prec_X v_2\prec_X\dots\prec_X v_n$. For $\prec_Y$, place the pendants in the same order as their parents, $y_{v_1}\prec_Y y_{v_2}\prec_Y\dots\prec_Y y_{v_n}$. For each tree edge $uv\in E(T)$ with $u=v_i$, $v=v_j$ and $i<j$, place $w_{uv}$ in $\prec_Y$ between $y_{v_i}$ and $y_{v_j}$ so that approximately half of the intermediate pendants $y_{v_{i+1}},\dots,y_{v_{j-1}}$ lie to its left and the other half to its right.
 
We check that every edge of $G_T$ has its set of crossings covered by a matching of size at most $3b-3$, considering each edge type:
\begin{itemize}
    \item Pendant edge $v_i y_{v_i}$: Because the pendants are ordered along $\prec_Y$ in the same way as their parents along $\prec_X$, no pendant edge crosses another pendant edge. The only edges that can cross $v_i y_{v_i}$ are subdivision edges $v_p w_{e'}$ for some tree edge $e'=uv\in E(T)$. Such a subdivision edges crosses $v_i y_{v_i}$ exactly when $\sigma(u)<i<\sigma(v)$ or $\sigma(v)<i<\sigma(u)$. By the bandwidth bound, both endpoints of any crossing tree edge $uv$ lie within $\sigma$-distance $b-1$ of $v_i$. The crossing tree edges therefore form a forest on the at most $2b-2$ vertices $v_t$ with $0<|\sigma(v_t)-i|\leq b-1$, hence number at most $2b-3$. Each contributes at most one subdivision edges (the two subdivision edges share $w_{e'}$, so they cannot both have their $X$-endpoint on the appropriate side of $v_i$). The contributing subdivision edges are pairwise non-adjacent, since distinct crossing tree edges give distinct $X$- and $Y$-endpoints. So the matching number of the crossings is at most $2b-2\leq 3b-3$.

    \item Subdivision edge $v_iw_{uv}$ where $uv\in E(T)$ with $u=v_i$, $v=v_j$, and $i<j$: The crossing edges fall into two classes. First, the pendant edges $v_t y_{v_t}$ with $i<t<j$ whose pendant $y_{v_t}$ lies on the opposite side of $w_{uv}$ in $\prec_Y$. By the bandwidth bound there are at most $j-i-1\leq b-1$ such intermediate vertices; even if all of them contribute, they form a matching of size at most $b-1$. Second, the subdivision edges $v_p w_{e'}$ of other tree edges $e'\neq uv$ that cross $v_i$ in $\sigma$. By the same forest argument as above, there are at most $2b-2$ such tree edges, each contributing at most one crossing subdivision edge, all pairwise non-adjacent. Hence the matching number of the crossings is at most $(b-1)+(2b-2)=3b-3$.
\end{itemize}
Therefore the drawing is $(3b-3)$-matching-planar.
\end{proof}
\begin{lemma}\label{lem:kmatching-to-bw}
    If $G_T$ is $k$-matching-planar, then $\operatorname{bw}(T)\le 2k + 1$.
\end{lemma}
\begin{proof}
Suppose that $G_T$ admits a $2$-layer $k$-matching-planar drawing for some integer $k\geq 0$, with linear orders $\prec_X^\ast$ on $X_T$ and $\prec_Y^\ast$ on $Y_T$. Let $\sigma^\ast$ be the order on $V(T)=X_T$ induced by $\prec_X^\ast$. We claim that the bandwidth of $T$ with respect to $\sigma^\ast$ is at most $2k+1$, which then implies $\operatorname{bw}(T)\leq 2k+1$.
 
Suppose for the sake of contradiction that some tree edge $uv\in E(T)$ has stretch $d\geq 2k+2$ in $\sigma^\ast$. Without loss of generality, $u\prec_X^\ast v$, and there are exactly $d-1\geq 2k+1$ vertices of $V(T)$ strictly between $u$ and $v$ in $\prec_X^\ast$; denote them by $v_{t_1},\dots,v_{t_{d-1}}$ in their order along $\prec_X^\ast$.
 
Consider the two halves $uw_{uv}$ and $vw_{uv}$ of the subdivided edge. For each intermediate vertex $v_{t_s}$, the pendant edge $v_{t_s}y_{v_{t_s}}$ has its $X$-endpoint strictly between $u$ and $v$ in $\prec_X^\ast$, while its $Y$-endpoint $y_{v_{t_s}}$ lies somewhere relative to $w_{uv}$. The edge $v_{t_s} y_{v_{t_s}}$ crosses exactly one of $uw_{uv}$ and $vw_{uv}$.
By the pigeonhole principle, at least $\lceil (d-1)/2\rceil\geq k+1$ of the pendant edges $v_{t_s}y_{v_{t_s}}$ cross the same subdivision edge, say $uw_{uv}$ without loss of generality. These pendant edges are pairwise non-adjacent, so they form a matching of size at least $k+1$ in the set of edges crossing $uw_{uv}$. This contradicts the $k$-matching-planarity of the drawing. Therefore every tree edge has stretch at most $2k+1$ in $\sigma^\ast$, hence $\operatorname{bw}(T)\leq 2k+1$.
\end{proof}

To complete the proof of the theorem, suppose for the sake of contradiction that there is a polynomial-time $c$-approximation algorithm $\mathcal{A}$ for the smallest such $k$, for some constant~$c\geq 1$. Given a caterpillar $T$, construct $G_T$ in polynomial time. Let $k$ be the smallest integer such that $G_T$ admits a $k$-matching-planar drawing. We run $\mathcal{A}$ on $G_T$ to obtain $\hat{k}$ with $k\leq \hat{k}\leq c k$, and output $\hat{b}=2\hat{k}+1$. By \Cref{lem:kmatching-to-bw}, $\operatorname{bw}(T)\leq 2k+1\leq 2\hat{k}+1=\hat{b}$. By \Cref{lem:bw-to-kmatching},
\[
\hat{b}=2\hat{k}+1\leq 2c k+1\leq 2c(3\operatorname{bw}(T)-3)+1\leq 6c\cdot\operatorname{bw}(T).
\]
Hence $\operatorname{bw}(T)\leq\hat{b}\leq 6c\cdot\operatorname{bw}(T)$, so $\hat{b}$ is a $6c$-approximation of $\operatorname{bw}(T)$ computable in polynomial time. As $6c$ is a constant, this contradicts the constant-factor inapproximability of bandwidth on caterpillars~\cite{DFU11}.
\end{proof}

\section{Future Work}

Several natural questions remain open. 
\begin{question}
    Does \textsc{TwoSided-$k$-MatchingPlanarity} admit an $\operatorname{XP}$-algorithm parameterized by~$k$? Is it $\operatorname{XNLP}$-hard, as is the case for $2$-layer $k$-planar graphs~\cite{KOW25}?
\end{question}

\begin{question}
    What is the exact maximum pathwidth of $2$-layer $k$-matching-planar graphs, closing the current gap between the lower bound $3\lfloor k/2\rfloor + 1$ and the upper bound $2k+1$?
\end{question}

\emph{$k$-cover-planar graphs} have been introduced by Hendrey et al.~\cite{HKW25}, where the vertex cover number (rather than the matching number) of the edges crossing any given edge is at most~$k$. By K\H{o}nig's theorem, these classes coincide in the $2$-layer setting. The general edge density bound of $3e(k+1)n$ for $k$-cover-planar graphs~\cite[Lemma~5.1]{HKW25} is therefore improved by our pathwidth bound to $(2k+1)n$, while $K_{k+1,n-k-1}$ is $2$-layer $k$-matching-planar and has $\approx(k+1)n$ edges.

\begin{question}
What is the tight edge density of $2$-layer $k$-matching-planar graphs?
\end{question}

A graph is \emph{circular $k$-matching-planar} if it admits a straight-line drawing with vertices on a circle in which the matching number of the edges crossing any given edge is at most~$k$. Any $2$-layer $k$-matching-planar drawing can be interpreted as a circular drawing by treating the two layers as two consecutive arcs of a circle. Note that pathwidth is unbounded for this class, as outerplanar graphs are circular $0$-matching-planar and there are $n$-vertex outerplanar graphs of pathwidth $\Omega(\log n)$ (e.g. complete binary trees). Hendrey et al.~\cite{HKW25} proved that graphs with such a drawing have treewidth~$O(k^3\log^2 k)$.

\begin{question}
What is the optimal upper bound on the treewidth of circular $k$-matching-planar graphs?
\end{question}

\begin{question}
What is the complexity of recognizing circular $k$-matching-planar graphs?
\end{question}

Finally, a key technical ingredient for developing an $\operatorname{XP}$-algorithm for recognizing circular $k$-planar graphs and for obtaining a tight upper bound on their treewidth is the following \emph{triangulation lemma}~\cite[Lemma~6]{FGKO024}: given a circular $k$-planar drawing of a graph $G$, there exists a triangulation $T$ of the outer cycle of $G$ such that every edge of $T$ is \emph{pierced} by at most $k$ edges of~$G$. This lemma underlies both the treewidth bound proved in~\cite{FGKO024} and the $\operatorname{XP}$-algorithm for recognizing circular $k$-planar graphs~\cite{KOW25}. We therefore ask the following question.

\begin{question}
    Is it true that, for every $k \ge 1$, every circular $k$-matching-planar drawing admits a triangulation of the outer cycle such that no $k+1$ pairwise independent edges of the drawing pierce any edge of the triangulation?
\end{question}

\bibliography{literature}

\end{document}